
\documentclass[11pt,letterpaper]{article}
\usepackage[margin=1in]{geometry}
\pdfoutput=1

\usepackage[ruled]{algorithm2e} % For algorithms

\SetAlFnt{\small}
\SetAlCapFnt{\small}
\SetAlCapNameFnt{\small}
\SetAlCapHSkip{0pt}
\IncMargin{-\parindent}

\usepackage[margin=1in]{geometry}
\pdfoutput=1

\usepackage{times}
\usepackage{helvet}
\usepackage{courier}
%% To resolve the conflict of two definitions of 
%% the "proof" command: sig-alternate wins.
%\let\proof\relax
%\let\endproof\relax
\usepackage{amsthm, amssymb, amsmath}
\usepackage{graphicx}
\usepackage{verbatim}
\usepackage{empheq}
\usepackage{xifthen}
\usepackage{refcount}
\usepackage{cuted}
\usepackage{float}
\usepackage{caption}
\usepackage{subcaption}
\usepackage[export]{adjustbox}

\def\EE{{\mbox{E}}}

\def\eqpun{\hspace{0.5em}}

\def\union{\cup}
\def\reals{\mathbb{R}}
\def\cost{c}

\listfiles

\newtheorem{theorem}{Theorem}
\newtheorem{lemma}{Lemma}

\newtheorem{proposition}{Proposition}
\newtheorem{definition}{Definition}

\begin{document}

\title{Incentives for Truthful Evaluations}

\author{ Luca de Alfaro\footnote{The authors are listed in alphabetical order.}\\
luca@ucsc.edu, m.faella@unita.it \\
       {Computer Science Department}\\
       {University of California}\\
       {Santa Cruz, CA, 95064, USA}\\
\and
    Marco Faella\\
    m.faella@unina.it \\
    {Dept. of Electrical Engineering and Information Technologies} \\
    {University of Naples “Federico II”, Italy} \\
\and 
Vassilis Polychronopoulos \;\; Michael Shavlovsky\\
      \{vassilis, mshavlov\}@soe.ucsc.edu\\
       {Computer Science Department}\\
       {University of California}\\
       {Santa Cruz, CA, 95064, USA}\\
}
%\author{Luca de Alfaro}%\titlenote{In alphabetical order}}
%\affiliation{%
%  \institution{University of California, Santa Cruz}
%  \department{Computer Science Department}
%  \city{Santa Cruz}
%  \state{CA}
%  \postcode{95064}
%  \country{USA}}
%
%\author{Marco Faella}
%\affiliation{%
%  \institution{University of Naples ``Federico II''}
%  \department{Dept. of Electrical Engineering and Information Technologies}
%  \country{Italy}}
%
%\author{Vassilis Polychronopoulos}
%\affiliation{%
%  \institution{University of California, Santa Cruz}
%  \department{Computer Science Department}
%  \city{Santa Cruz}
%  \state{CA}
%  \postcode{95064}
%  \country{USA}}
%
%\author{Michael Shavlovsky}
%\affiliation{%
%  \institution{University of California, Santa Cruz}
%  \department{Computer Science Department}
%  \city{Santa Cruz}
%  \state{CA}
%  \postcode{95064}
%  \country{USA}}
%

\date{April, 2017}

\maketitle

\begin{abstract}
We consider crowdsourcing problems where the workers are asked to provide evaluations for items; the worker evaluations are then used to estimate the true quality of items.
Lacking an incentive scheme, workers have no motive in making effort in completing the evaluations, providing inaccurate answers instead.
We show that a simple approach of providing incentives by assessing randomly chosen workers is not scalable: to guarantee an incentive to be truthful the number of workers that the supervisor needs to assess grows linearly with total number of workers.
To address the scalability problem, we propose incentive schemes that are truthful and cheap: the truthful as the optimal worker behavior consists in providing accurate evaluations, and cheap because the truthfulness is achieved with little supervision cost.
We consider both discrete evaluation tasks, where an evaluation can be done either correctly, or incorrectly, with no degrees of approximation in between, and quantitative evaluation tasks, where evaluations are real numbers, and the error is measured as distance from the correct value.  
For both types of tasks, we develop hierarchical incentive schemes that can be effected with a small amount of supervised evaluations, and that scale to arbitrarily large crowd sizes: they have the property that the strength of the incentive does not weaken with increasing hierarchy depth.
We show that the proposed hierarchical schemes are robust: they provide incentives in heterogeneous environments where workers can have limited proficiencies, as long as there are enough proficient workers in the crowd.
Interestingly, we also show that for these schemes to work, the only requisite is that workers know their place in the hierarchy in advance.
\end{abstract}

%\sloppy

\section{Introduction}

Crowdsourcing allows access to large populations of human workers, and it can be an efficient and cheap solution for many applications. 
The very farming out of work to many independent workers, however, creates the problem of quality control.
In the absence of effective supervision or quality-control mechanisms, the workers may submit low quality work, or they may deliberately engage in straight-out vandalism. 
Workers can also collude with each other to game the system and collect rewards without performing the required work. 
In this paper, we describe supervisory schemes that provide an incentive towards high-quality work, and we show that the incentive is both cheap in terms of the required supervisor time and work overhead, and effective in making honest and accurate work the best strategy for workers.

We focus on crowdsourcing tasks which are \textit{verifiable}, that is, they have objective answers that a supervisor or another worker can check to conclude whether a worker is submitting quality work or not.
We further consider two types of verifiability: {\em binary,} and {\em quantitative.}
In binary verifiable tasks, the question of whether a worker submits quality work can be answered with either a Yes or a No.
In quantitatively verifiable tasks, the quality question can be answered only quantitatively as a measure of distance between the work submitted by a worker, and the work that was expected. 
Classification tasks are examples of binary verifiable tasks, as supervisors can check that the classification in discrete categories submitted by a worker matches expectations.
Grading tasks are examples where quantitative evaluation is natural; the quality of the human worker can be determined as a function of the distance of the worker's answer to the true grade.

We propose schemes that provide truthful incentives to the human workers at a low cost for the supervisors regardless of the size of the human worker population.

Using golden sets is a practice for quality control in crowdsourcing
\cite{venetis2012quality}.
Golden sets are sets of tasks for which the answer is known to the supervisor and are presented to workers with the goal of evaluating their performance;
such sets showed a positive impact on worker performance in crowdsourcing systems
\cite{harris2011you}.
Golden sets, however, can be difficult and costly to obtain \cite{oleson2011programmatic}, as they create an overhead for workers to perform extra tasks for quality control.
In applications that already require significant incentive to extract a small amount of items of work from workers, wasting work on golden set tasks is undesirable if there is an alternative.
Moreover, golden sets might be unavailable in advance.
For example, in peer grading of homework assignments, a golden set would need to be constructed for each homework assignment (unless the assignments are identical).
In addition, the golden set approach is problematic from a mechanism design point of view: knowing that the supervision is performed in this way, workers can infer the identity of golden set tasks by intersecting allocated tasks of different workers, and minimize effort by only being truthful in the golden tasks.
For class homework, where students can communicate with each other and some distinctive features may be easy to spot (`Did you also have to grade the Android app where the ball goes through the green wall?'), the golden set approach can be particularly weak.

The schemes that we propose rely on comparing the answers given by workers performing the same tasks with each other; in particular, they do not require golden sets of tasks for which the answer is known in advance. 

We first study a simple \textit{one-level} scheme and thereafter we propose a \textit{hierarchical} scheme.

In the simple \textit{one-level} scheme, workers perform tasks which are directly evaluated by the supervisor with some probability.
We study the conditions that ensure that workers maintain the incentive to provide truthful answers.

The one-level scheme does not scale to large crowds, as the supervisor needs to perform an amount of work that grows linearly with the number of workers. 
Thus, we introduce a {\em hierarchical\/} scheme, where the work of the supervisor is bounded even as the number of tasks and workers grows.
The scheme organizes workers in a hierarchy, where the supervisor is at the top, and the other workers are arranged in layers below.
Every worker in the hierarchy shares one common task with each worker below, so that it can verify part of the work performed by lower levels of the hierarchy. 
This hierarchical verification scheme entails no wasted work, and provides a truthful incentive to the workers regardless of their level in the hierarchy.
The scheme is based on one, uniform, category of workers: we do not need to split workers into ``regular'' workers and meta-reviewers.
As the worker population increases, the hierarchy becomes deeper, but the amount of work that the supervisor needs to do remains constant, and so does the incentive towards correct behavior.
We show that the only information about the hierarchy that needs to be communicated beforehand to the workers is their level in the hierarchy itself. 
We provide matching upper and lower bounds for the amount of information that needs to be communicated beforehand to workers in the hierarchy to maintain a truthful incentive, showing that a logarithmic amount of information in the number of workers is both necessary and sufficient.

We study the practical aspects of the implementation of the hierarchy.
Many crowdsourcing tasks benefit from redundancy, that is, from assigning the same task to more than one worker. 
For instance, by assigning the same item to multiple graders, it is possible to reconstruct a higher-accuracy grade for the item than would be available from one grader alone \cite{piech_tuned_2013}
We show that in redundant tasks in which there is no control over task allocation to workers, the problem of creating an optimal hierarchy is NP-hard. 
We present fast approximation algorithms that are optimal within constant factors.
If the supervisor can control the allocation of tasks to workers, as in many real applications, we show that constructing the hierarchy is an easy problem. 

We develop our results first in the case of binary verifiable tasks.
These are common in classification tasks: spam or not, correct answer or not, etc.
We consider a model where workers need to make an ``effort'' of $f(e)$ in order to ensure that their error probability is lower than $e$. 
We obtain a tight lower bound for the mistake penalties necessary to ensure that the correctness incentive propagates to all levels of the hierarchy. We show that the truthful incentive holds even when the supervisor occasionally makes mistakes, and in populations of workers with diverse proficiency, where workers can have limited proficiency, provided that there are enough proficient workers in the crowd.

We then show how the results on binary verifiable tasks transfer to the case of quantitative tasks.
In quantitative tasks, the notion of a task performed correctly is replaced by the notion of variance in the quantitative outcome of the task.
The effort function relates the effort (or cost) to the worker to the variance in the worker's evaluation.
In the model, increased worker effort produces higher expected precision of the worker's answers, and is similar to other models proposed in the literature \cite{cai2014optimum}.
We show that we can shape the incentives to ensure that it is optimal for all players to put sufficient effort to ensure their variance is below a given threshold, independently of the worker position in the hierarchy.
In other words, hierarchical distance from the supervisor does not entail loss of precision in the tasks performed.
This enables the scheme to scale to arbitrarily large crowds, while keeping the work of the supervisor bounded and the precision constant.

The proposed schemes are thus applicable to a multitude of crowdsourcing applications, from conventional classification tasks using generic crowds in crowdsourcing marketplaces to peer grading in Massive Open Online Courses with an arbitrarily large population of students.

\section{Related Work}

Providing incentives to human agents to return truthful responses is one of the central challenges for crowdsourcing algorithms and applications \cite{ghosh2013gamechapter}.

Prediction markets are models with a goal of obtaining predictions about events of interest from experts.
After experts provide predictions, a system assigns a reward based on a scoring rule to every expert. Proper scoring rules ensure that the highest reward is achieved by reporting the true probability distribution
\cite{winkler1968good,johnson1990efficiency,clemen2002incentive}.
An assumption of the scoring rules is that the future outcome must be observable.
This assumption prevents crowdsourcing systems to scale to large crowds as obtaining the correct answer for each event or task is prohibitively expensive.

The model presented in \cite{carvalho2013} relaxes this assumption.
The proposed scoring rule evaluates experts by comparing them to each other.
The model assigns a higher score for an expert if her predictions are in agreement with predictions of other experts.
Work \cite{carvalho2013} belongs to the class of peer prediction methods.
Peer prediction methods is wide class of models for providing incentives 
\cite{serum,peerprediction,faltings2012eliciting,witkowski2012robust,witkowski2013dwelling,dasgupta2013crowdsourced,jurca2009mechanisms,radanovic2013robust,riley2014minimum,zhang2014elicitability,waggoner2014output,kamble2015truth,kong2016putting} .
Such methods elicit truthful answers by analyzing the consensus between workers in one form or another.
Peer prediction methods encourage cooperation between workers and, as a result,
promote uninformative equilibria.
The study in \cite{jurca2005enforcing} shows that for the scoring rules proposed in the peer-prediction method \cite{peerprediction}, a strategy that always outputs ``good'' or ``bad'' answer is a Nash equilibrium  with a higher payoff than the truthful strategy.
Works by \cite{jurca2009mechanisms,waggoner2014output} show that
the existence of such equilibria is inevitable.
In contrast, hierarchical incentive schemes we propose make the truthful strategy the only Nash equilibrium.

The model described in \cite{jurca2009mechanisms} considers a scenario of rational buyers who report on the quality of products of different types.
In the developed payment mechanism the strategy of honest reporting is the only Nash equilibrium.
However, the model requires that the prior distribution over product types and condition distributions of qualities is the common knowledge.
This requirement is a strong assumption.

The Bayesian Truth Serum scoring method proposed in \cite{serum} elicits truthful subjective answers on multiple choice questions.
The author shows that the truthful reporting is a Nash equilibrium with the highest payoff.
The model is different from other approaches in that besides the answers, 
workers need to provide predictions on the final distribution of answers.
Workers receive a high score if their answer is ``surprisingly'' common - the actual
percentage of their answer is larger than the predicted fraction.
Similarly, incentive mechanisms in  \cite{witkowski2012robust,witkowski2013dwelling,radanovic2013robust,radanovic2014incentives,riley2014minimum} require workers provide belief reports along with answers on tasks.
Truthful mechanisms in \cite{peerprediction,zhang2014elicitability,kong2016putting} requires knowledge about the distribution from which answers are drawn.
Our mechanisms do not rely on worker's beliefs on other workers' responses nor require knowledge about the global answer distribution.

The work in \cite{Alon:2011} studies the problem of incentives for truthfulness in a setting where persons vote other persons for a position. The analysis derives a randomized approximation technique to obtain the higher voted persons. 
The technique is strategyproof, that is, voters (which are also candidates) cannot game the system for their own benefit. 
The setting of this analysis is significantly different from ours, as the limiting assumption is that the sets of voters and votees are identical. 
Also, the study focuses on obtaining the top-$k$ voted items, while in our setting we do not necessarily rank items.
%Another $k$-selection method that provides truthful incentives is proposed in \cite{kurokawaimpartial}.

The PeerRank method proposed in \cite{Walsh2014} obtains the final grades of students using a fixed point equation similar to the PageRank method. 
However, while it encourages precision, it does not provide a strategyproof method for the scenario that students collude to game the system without making the effort to grade truthfully.

Authors of \cite{oleson2011programmatic} propose an automated process to generate golden tasks for quality assurance in crowdsourcing.
An initial set of golden tasks is used to bootstrap a larger set of golden tasks.
A tasks is chosen if it has several matching answers by the reliable workers, that is, workers who provided
correct answers to the original golden tasks.
The chosen tasks are then used to create new golden tasks by injecting common errors.
The step of error injecting is to ensure that common error types are present in then new golden set.
Note that the process of detecting common error types is manual.
The authors report on decreasing amount of manual work to manage large crowds.
The main difference with our work is that we provide theoretical guarantees that the proposed incentive mechanisms require constant amount of work by the supervisor for arbitrary large crowds.
Moreover, golden sets are not suitable for all applications.
For example, in peergrading of homework assignments the total set of homework submissions cannot be obtained
before the homework is posted. Also, information on the competence of workers from previous homeworks or classrooms cannot be used reliably for newer homeworks or in different classrooms with different material.
The incentive schemes we propose do not require golden sets.

%Work by \cite{sheng2008get}
%proposes to improve quality of data annotation quality by increasing redundancy on tasks that have high uncertainty
%about the quality of the current answers.
%This approach is susceptible to colluding and does not rule out uninformative equilibria.

Employees in organizations and firms are frequently organized into hierarchies.
Economists study incentives in hierarchical organizations, the influence of
hierarchical structure on firms sizes and the loss of control within hierarchies.
Previous studies on hierarchies relevant to ours are found in \cite{williamson1967hierarchical, calvo1978supervision, calvo1979hierarchy, qian1994incentives}

However, our work and the work by organizational economists is not directly comparable due to different models.
Our model is designed to reflect the nature of evaluation tasks.
Models by \cite{williamson1967hierarchical, calvo1978supervision, calvo1979hierarchy, qian1994incentives} are designed to reflect economic aspects of firms.
Our work is not directly comparable to the work of organizational economist, as the model we describe are
applicable to peer grading setting rather than corporate hierarchy setting.
For example, the model described in \cite{williamson1967hierarchical} assumes that workers on the bottom layer do the production work, 
while all other workers (managers) do the coordination and supervision work.
Subordinate workers satisfy requests by their superiors with a discount factor that is within $(0, 1)$ range.
The smaller the discount factor, the smaller the contributions by the workers to the firm's revenue.
In contrast, in our models, all workers perform evaluation work.
Workers are evaluated based on the comparison of their answers to the answers of their supervisors.
Our models admit that worker can make mistakes or have bounded proficiencies.
The work in \cite{williamson1967hierarchical} shows that there is a limit on the size of hierarchy due to loss of control.
In contrast,  the hierarchies of workers that we propose have the property that the incentive to do truthful evaluations does not deteriorate with the hierarchy depth.

\section{Crowdsourcing Models}

We consider two crowdsourcing models: the binary-verifiable model and the quantitative model.
In the binary-verifiable model, tasks have a property that a proposed solution can be verified by comparing with the correct solution.
%For example, classifying images from span/not spam categories is a binary-verifiable tasks.
For example, a task of classifying items from a discrete set categories is binary-verifiable.
A task of grading essays, in contrast, is not binary-verifiable: the correctness of an essay cannot be established based on a comparison with one designated essay.
However, the quality of an essay can be expressed via a numerical grade.
In the quantitative setting, the solution of a task is a real valued number.
We make these settings precise via the following models.

Let $U$ and $I$ be the set of workers and tasks respectively.
Every worker $u$ performs a subset of tasks from $I$.
We construct a bipartite graph $G=(U \union I, E\}$ with
tasks and workers as nodes.
For a task $i \in I$ and a worker $u \in U$, the edge $(i, u)$ belongs to the set of edges $E$ 
iff the worker $u$ was assigned the task $i\in I$.
We denote the set of tasks assigned to a worker $u$ as $\partial u$,
the set of workers assigned with a task $i$ as $\partial i$.

\subsection{The Binary-verifiable Model}

%We make the binary-verifiable setting precise via the following model.
In the binary-verifiable model, each task $i\in I$ has a solution from a set $A$.
A worker performs task $i$ by choosing a solution from set $A$.
Every tasks $i\in I$ has a correct solution $s_i\in A$.
To find the correct solution the worker needs to make effort.

\smallskip

\noindent{\bf Effort Function.}
Let $a_i$ be a solution proposed by a worker on a task $i\in I$, and
let $e$ be a probability that $a_i$ is a wrong solution, i.e.
$e = Pr(a_i \neq s_i)$.
The effort is defined by a function $f:(0, 1]\to [0, +\infty)$, 
such worker needs to pay cost $f(e)$ to have error probability at most $e$.
We require function $f$ to be monotonically decreasing (larger error bounds cost less), and to be differentiable and strictly convex.
Requiring that $f$ is convex does not entail any loss in generality.
For $x, y \in (0, 1]$ and $0 < \alpha < 1$, if we had 
$f((1-\alpha) x + \alpha y) > (1-\alpha) f(x) + \alpha f(y)$, 
contradicting convexity, then the worker would obtain a lower cost simply by paying $f(x)$ a fraction $1-\alpha$ of the time, and $f(y)$ a fraction $\alpha$ of the time, obtaining overall error probability equal to $(1-\alpha) x + \alpha y$ at a cost lower than $f((1-\alpha) x + \alpha y)$.
Strict convexity of $f$ entails that, the closer $x$ goes to 0, the more difficult it gets to reduce the error by the same amount.

\smallskip

\noindent{\bf Strategies.}
A worker's strategy is the choice of error probability $e$ and corresponding effort $f(e)$.
For a specified error threshold $\varepsilon \in (0,1]$, we call a strategy with error probability $e$ truthful iff $e < \varepsilon$.

\smallskip

\noindent{\bf Supervision.}
We assume that there is a supervisor that can verify whether tasks are done correctly or not. 
Let worker $u$ provides a solution $a_u$ on a task $i$, 
while the supervisor provides a solution $a$ the same task of the worker.
The supervisor assigns loss $l(a_u, a)$ to the worker defined by $l(a_u, a) = 0$ if $a_u = a$, and $l(a_u, a) = C$ if $a_u \neq a$, for a fixed punishment cost $C > 0$. 
Note that, when the supervisor verifies a task $i$, the supervisor can verify all the workers that also performed $i$, that is, all the workers in $\partial i$.

\subsection{The Quantitative Model}

In the quantitative model, the workers are asked to evaluate items from a set $I$, associating to each item $i \in I$ a value $q_i \in \reals$. 

\smallskip 

\noindent{\bf Effort Function.}
In order to produce a precise measurement of the quality of an item,
a worker needs to pay a price defined by an effort function.
To produce a measurement of the quality of an item to within variance $v$, a worker needs to pay a price $f(v)$, where $f$ is a non-negative, monotonically decreasing, strictly convex function defined on the set $\reals^{+}$ of strictly positive variances.
Again, the hypotheses that $f$ be strictly convex and monotonically decreasing are not restrictive.

\smallskip 

\noindent{\bf Strategies.}
A strategy of a worker consists in choosing a precision (variance) $v$ and corresponding effort $f(v)$.
Similarly to the binary-verifiable model, let $\varepsilon \in \reals$ be a variance threshold.
We call a strategy $v$ truthful if $v < \varepsilon$.

\smallskip

\noindent{\bf Supervision.}
In order to produce an incentive towards precise work,
workers can be evaluated by the supervisor, or by a worker in a higher level.
If the worker produces estimate $x$, while the supervisor or upper-level worker produces estimate $y$, the worker is penalized using the loss function
\begin{align}
    \label{loss-quant}
  \ell(x, y) = \cost (x-y)^2 \eqpun ,
\end{align}
where $\cost > 0$ is a penalty constant.

\smallskip

In the following sections we will propose one level and hierarchical supervision schemes that provide incentive to workers so that they play with the truthful strategy.

\section{One Level Supervised Schemes}

In this section we study ``one-level'', or flat, supervision schemes where 
workers are directly verified by the supervisor.
To verify a worker $u\in U$, the supervisor examines a task $i\in \partial u$ assigned to the worker. 
The supervisor then imposes a loss $l$ to the worker depending on their solution.
To provide an incentive to workers $U$ to play with a truthful strategy,
the  supervisor chooses a subset of tasks to examine.
We use $p$ to denote the probability that a randomly chosen worker 
has a task that belongs to the subset.
By selecting first a random subset of $m$ workers, and then picking an item for each worker, the supervisor can ensure a probability at least $p = m/|U|$ of verifying a worker. 
The higher the probability $p$, the higher the influence of the supervisor on all workers.
We show that the number of items the supervisor needs to evaluate grows linearly with the numbers of items.
The result of this section is very similar to the one in the work by \cite{gao2016incentivizing}.
We consider a setting where the supervisor provides precise answers without making mistakes.
Such an assumption simplifies the proofs of theorems in this section, without limiting the generality of
the results, as we consider the most favorable setting for the instructor to provide incentives.
Still, the number of items to evaluate grows linearly with the total number of workers.
We will relax this assumption in the future sections and consider settings when the supervisor makes mistakes too.
%%We show that the supervisor of a flat incentive scheme has to evaluate items items that grows linearly  require  to guarantee incentives to be truthful 

\smallskip

\noindent {\bf Binary-verifiable model.}
Theorem~\ref{th-lower-bound-p} establishes a lower bound on $p$ for the binary-verifiable model so that workers have an incentive to be truthful.
Note that $-f'(x) > 0 $ as the effort function is monotonically decreasing.

\begin{theorem}
\label{th-lower-bound-p}
If every worker is assigned $k$ tasks, the penalty cost equals $C$, and the probability $p$ of being verified by the supervisor satisfies the
following inequality:
\begin{align}
    \label{ineq-flat-p}
    p > \frac{(-f'(\varepsilon))k}{C} ,
\end{align}
then workers minimize their loss by playing with a truthful strategy.
\end{theorem}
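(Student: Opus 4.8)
The plan is to reduce each worker's decision to a one-dimensional strictly convex minimization and to read off the truthfulness threshold from its first-order condition. Fix a worker $u$ whose $k$ tasks are $\partial u$, and suppose $u$ adopts error probabilities $e_1,\dots,e_k$ on them, paying effort $\sum_{j=1}^k f(e_j)$. With probability $p$ the supervisor verifies $u$, inspecting one of its $k$ tasks uniformly at random; if the inspected task is $i_j$, the worker is caught wrong with probability $e_j$ and then pays $C$. Hence $u$'s total expected loss is
\begin{align}
  \label{eq-flat-worker-loss}
  \sum_{j=1}^k f(e_j) \;+\; p\cdot\frac{C}{k}\sum_{j=1}^k e_j
  \;=\; \sum_{j=1}^k \Bigl( f(e_j) + \frac{pC}{k}\, e_j \Bigr) .
\end{align}
Since the right-hand side is a sum of copies of the same one-variable function, it is minimized by choosing every $e_j$ to minimize $h(e) := f(e) + \frac{pC}{k}\,e$ over $e\in(0,1]$.

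Next I would analyze $h$. It is strictly convex (a strictly convex function plus a linear one) and differentiable, with $h'(e) = f'(e) + pC/k$. The hypothesis~\eqref{ineq-flat-p} says precisely $pC/k > -f'(\varepsilon)$, i.e. $h'(\varepsilon) > 0$. Because $f$ is convex, $f'$ is nondecreasing, hence $h'(e) \ge h'(\varepsilon) > 0$ for all $e\in[\varepsilon,1]$, so $h$ is strictly increasing on $[\varepsilon,1]$. Moreover, since $h'(\varepsilon) > 0$, the definition of the derivative at $\varepsilon$ yields some $e'\in(0,\varepsilon)$ with $h(e') < h(\varepsilon)$. Combining the two facts, $h(e') < h(e)$ for every $e\ge\varepsilon$. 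Applying this termwise in \eqref{eq-flat-worker-loss}, any profile $(e_1,\dots,e_k)$ with some $e_j\ge\varepsilon$ is strictly dominated by the profile obtained by replacing every such $e_j$ with $e'$, which is truthful. Therefore a worker minimizing expected loss plays a truthful strategy.

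The step I expect to require the most care is the reduction embodied in \eqref{eq-flat-worker-loss}: I must argue that the worker gains nothing by correlating errors across tasks (immediate, since the loss separates additively) nor by randomizing its effort level on a given task (here strict convexity of $h$ is used: a mixture over error levels has expected cost at least $h$ evaluated at the mean error level, as already observed in the paper's justification of convexity), and that the supervisor's inspection is uniform over $\partial u$ — which is exactly what the ``pick one random worker, then one of its items'' construction guarantees. A minor point worth noting is that the conclusion never requires the unconstrained minimizer of $h$ to lie in $(0,1]$ or even to exist: it follows solely from $h'(\varepsilon) > 0$, which makes every error level $e\ge\varepsilon$ strictly worse than some smaller one, so the argument is insensitive to the behavior of $f$ near $0$.
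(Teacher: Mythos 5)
Your proof is correct, and its core coincides with the paper's: write the worker's expected loss as effort plus expected penalty, use strict convexity, and deduce from the hypothesis (\ref{ineq-flat-p}), i.e.\ $pC/k > -f'(\varepsilon)$, that optimal play has error below $\varepsilon$. You depart from the paper in two ways, each of which buys something. First, you allow the worker a separate error level on each of its $k$ tasks and show the loss separates into $k$ copies of the one-variable problem $\min_e\, f(e) + (pC/k)e$; the paper takes a single error probability $e$ as the strategy (consistent with its definition of a strategy) and writes $L(e) = kf(e) + epC$, so your version certifies truthfulness over a strictly larger strategy space at no extra cost, at the price of having to pin down that inspection is uniform over the worker's tasks. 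Second, and more substantively, the paper's proof asserts that the global minimum $e^*$ satisfies the stationarity equation $kf'(e^*) + pC = 0$ and then compares $f'(e^*)$ with $f'(\varepsilon)$; this tacitly assumes an interior minimizer exists, which need not hold on the domain $(0,1]$ (if $kf'(e) + pC > 0$ throughout, the infimum is approached only as $e \to 0^+$ and no stationary point exists -- though in that case the worker is truthful anyway). Your argument sidesteps this entirely: from the positivity of the derivative at $\varepsilon$ you conclude that every $e \ge \varepsilon$ is strictly dominated by some $e' < \varepsilon$, which is exactly the statement to be proved and requires no existence or first-order-equality claim. The two proofs reach the same conclusion by the same convexity mechanism, but yours is the more watertight rendering.
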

\begin{proof}
    The expected loss $L$ of a worker $u\in U$ consists of two components: the effort to perform $k$ tasks, and the expected penalty due to the supervisor when the worker provides a wrong solution%; $\loss = kf(e) + epC$.
    \begin{align}
        \label{eq-flat-loss-binar}
        L(e) = kf(e) + epC \eqpun .
    \end{align}
    The expected loss $L(e)$ is a convex function of $e$ as a sum of a decreasing strict convex and an increasing linear function, where the decreasing convex function is bounded from below by 0.
    Thus it has a global minimum $e^*$ that satisfies the equality
    \begin{align}
        \label{eq-flat-p-1}
        kf'(e^*) + pC = 0 \Rightarrow p = \frac{-f'(e^*)k}{C} \eqpun .
    \end{align}
    We substitute the probability $p$ in inequality (\ref{ineq-flat-p}) with the expression of (\ref{eq-flat-p-1}) and obtain
    \begin{align}
        \label{eq-flat-p-2}
        \frac{-f'(e^*)k}{C} &> \frac{-f'(\varepsilon)k}{C} \Rightarrow %\\[1ex]
        f'(e^*) < f'(\varepsilon) \eqpun .
    \end{align}
    Because function $f'(e)$ is increasing as the derivative of a strictly convex function \cite{rockafellar}, it follows from the inequality
    $f'(e) < f'(\varepsilon)$ that $e < \varepsilon$.
    Therefore, to minimize the loss, a worker chooses to play with an error probability $e^*$ such that $e^* < \varepsilon$,
    thus being truthful.
\end{proof}

\noindent {\bf Quantitative model.}
For the quantitative model, the following theorem establishes a similar lower bound on the probability $p$ of being verified by the supervisor.
\begin{theorem}
\label{th-lower-bound-p-quant}
If every worker is assigned $k$ tasks, the penalty constant of loss (\ref{loss-quant}) is $\cost$ , and the probability $p$ of being verified by the supervisor satisfies the following inequality
\begin{align}
    \label{ineq-flat-p-quant}
    p > \frac{(-f'(\varepsilon))k}{\cost} ,
\end{align}
then workers minimize their loss by playing with a truthful strategy.
\end{theorem}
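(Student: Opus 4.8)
The plan is to mirror the proof of Theorem~\ref{th-lower-bound-p} almost verbatim, with the error probability $e$ replaced by the variance $v$ and the penalty $C$ replaced by the constant $\cost$. First I would write down the expected loss of a worker who commits to variance $v$ on each of its $k$ tasks. The effort component is $kf(v)$. For the penalty component, observe that when the worker is verified the supervisor supplies the exact value $q_i$, so the expected penalty incurred on that task is $\cost\,\E[(x-q_i)^2]$; by the decomposition $\E[(x-q_i)^2] = \Var{x} + (\E{x}-q_i)^2$, biasing the estimate only increases this quantity while leaving $f(v)$ unchanged, so there is no loss of generality in assuming the worker's estimate is unbiased, in which case $\E[(x-q_i)^2]=v$. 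Hence the expected loss is
\begin{align}
  L(v) = kf(v) + p\cost v \eqpun .
\end{align}

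Next, exactly as in the binary case, $L$ is a convex function of $v$: it is the sum of $k$ times the strictly convex, decreasing function $f$ (which is bounded below by $0$) and the increasing linear function $p\cost v$. Therefore it has a global minimum $v^*$ satisfying the first-order condition $kf'(v^*) + p\cost = 0$, equivalently $p = \frac{-f'(v^*)k}{\cost}$. Substituting this expression for $p$ into the hypothesis~(\ref{ineq-flat-p-quant}) yields $\frac{-f'(v^*)k}{\cost} > \frac{-f'(\varepsilon)k}{\cost}$, i.e.\ $f'(v^*) < f'(\varepsilon)$. Finally, since $f$ is strictly convex, $f'$ is increasing \cite{rockafellar}, so $f'(v^*) < f'(\varepsilon)$ forces $v^* < \varepsilon$; thus the worker's loss-minimizing strategy is truthful, which is the claim.

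I do not anticipate a genuine obstacle: the argument is structurally identical to that of Theorem~\ref{th-lower-bound-p}, and the convexity-plus-monotonicity computation transfers unchanged. The only step with no counterpart in the binary setting is the reduction of the expected penalty to the clean form $p\cost v$, which relies on the bias--variance decomposition above together with the observation that introducing bias is never advantageous to the worker; once that reduction is in place, everything else is routine.
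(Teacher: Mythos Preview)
Your proposal is correct and matches the paper's approach: the paper explicitly omits the proof, stating that it follows the same steps as Theorem~\ref{th-lower-bound-p}, and the commented-out sketch in the source writes the expected loss as $L(v)=kf(v)+vp\cost$ and then repeats the convexity argument verbatim. Your additional bias--variance justification for why the expected penalty reduces to $p\cost v$ is a useful clarification the paper does not spell out.
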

We omit the proof as it follows similar steps as in the poof of Theorem~\ref{th-lower-bound-p}.
\iffalse
    \begin{proof}
        The expected loss $L(v)$ of a worker $u\in U$ with error $v$ consists of two components.
        The first component is the effort to perform $k$ tasks with error $v$.
        The second component is the expected penalty (\ref{loss-quant}) due to the supervisor which  is $E[l] = \cost E(x-y)^2$ where $E(x-y)^2$ is by definition equal to the error $v$. 
        Therefore, the expected loss $L(v) = kf(v) + vp\cost$; and it is similar to loss (\ref{eq-flat-loss-binar}).
        Applying similar derivation steps as in Theorem~\ref{th-lower-bound-p}, we obtain that $f'(v^*) < f'(\varepsilon)$, and
        $v^* < \varepsilon$.
    %
    %    \begin{align*}
    %        \loss(v) = kf(v) + p\cost v \eqpun .
    %    \end{align*}
    %    The rest of the proof is similar to the proof of Theorem~\ref{th-lower-bound-p}.
    %    The expected loss $\loss$ is a convex function of $v$ as a sum of a convex and a linear function, and it has
    %    a global minimum $v^*$ that satisfies equality
    %    \begin{align}
    %        \label{eq-flat-p-quant-1}
    %        &p = \frac{-f'(v^*)k}{\cost} \eqpun .
    %    \end{align}
    %    Using inequality (\ref{ineq-flat-p-quant}) and equality (\ref{eq-flat-p-quant-1}), we obtain
    %    \begin{align*}
    %        %\frac{-f'(v^*)k}{\cost} &> \frac{-f'(\varepsilon)k}{\cost} \\
    %        f'(v^*) &< f'(\varepsilon) \eqpun .
    %    \end{align*}
    %    Function $f'(v)$ is increasing as the derivative 
    %    of a strictly convex function \cite{rockafellar}.
    %    Thus, inequality
    %    $f'(v) < f'(\varepsilon)$ implies inequality $v < \varepsilon$ .
    %
    %    Therefore, to minimize the loss, a worker chooses to play with a variance $v^*$ such that $v^* < \varepsilon$,
    %    i.e. by being truthful.
    \end{proof}
\fi

For both models, the number of workers $m = p|U|$ that the supervisor needs to examine grows linearly with the total number of workers. 
This limits the applicability of the flat approach to relatively small task sets and worker crowds. 
In the following section, we will develop a hierarchical setting that overcomes this limitation.

\section{Hierarchical Supervised Schemes}
\label{sec-hierarchical}

In this section we develop hierarchical schemes that require
a fixed amount of work by the supervisor to provide an incentive to workers for doing diligent work, regardless of the total number of workers.
We first consider the case without redundancy when no task is guaranteed multiple workers; the case with redundancy, when every task is guaranteed multiple workers, is studied in Section~\ref{sec-redundancy} later. 
We develop hierarchical incentive schemes for both the binary-verifiable model and the quantitative model.
In these schemes all workers perform tasks; there are no special meta-review tasks. 
The tasks are assigned so that each worker shares at least one task with a worker one level above in the hierarchy. 
By comparing the answers of the workers on these shared tasks, the workers at upper levels effectively check the work of workers at lower hierarchical levels. 
We note that the workers do not know which tasks they share with other workers; all they need to know, as we will show, is their level in the hierarchy. 
We will show that an incentive to be truthful does not deteriorate as the depth of a hierarchy grows.

The scheme organizes workers into a {\em supervision tree} (see Definition~\ref{def-tree}).
The internal nodes of the supervision tree represent workers; the leaves
represent tasks.
A parent node and a child node share one task;
this shared item is used to evaluate the quality of the child node's review work.
At the root of the tree is the supervisor
who is truthful, that is, she has a small probability of mistakes.

\begin{definition}
  \label{def-tree}
  A supervision tree of depth $L$ is a tree with tasks as leaves, workers as internal nodes, and the supervisor as root.
  The nodes are grouped into levels $l = 0, \ldots,
  L-1$, according to their depth; the leaves are the nodes at level
  $L-1$ (and are thus all at the same depth).
  In the tree, workers at level $L-2$ perform the tasks they are connected to.
  Every node at level $0 \leq l < L - 2$ performs exactly
  one task in common with each of its children.
\end{definition}

To construct a supervision tree of branching factor at most $k$, we proceed as follows.
We place the tasks as leaves and the above level of workers with at most $k$ tasks per worker.
Once level $l$ is built, we build level $l-1$ by enforcing a branching
factor of at most $k$.
For each node $x$ at level $l$, let $y_1, \ldots, y_n$ be its children.
For each child $y_1, \ldots, y_n$, we pick at random a task $s_i$
performed by $y_i$, and we assign node $x$ with the task of examining the set
$\{s_1, \ldots, s_n\}$ of tasks.
At the root of the tree, we place the supervisor, following the same
method for assigning tasks, that is,  we assign the supervisor with doing one task from each of his or her children nodes, picked at random.
Figure~\ref{fig-review-tree} illustrates a supervision tree with branching
factor 2 and depth 3.

\begin{figure}[!ht]
\centering
\includegraphics[width=0.4\linewidth]{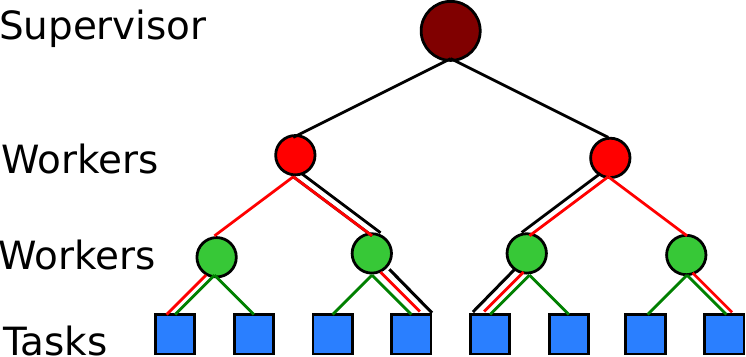}
\caption{An example of a supervision tree with branching factor 2. The
  process starts bottom up. Each worker is assigned 2 tasks.
For each depth-2 worker, a depth-1 worker is assigned 
one task in common with worker at depth-2 (red edges).
The evaluation of the depth-2 worker will depend on the depth-1 worker.
Similarly, the supervisor evaluates a depth-1 worker by reviewing one
of the two tasks that the depth-1 worker has done(black edges).}
\label{fig-review-tree}
\end{figure}

In the following two subsections we study hierarchical schemes with binary-verifiable and quantitative tasks respectively.

\subsection{The Binary-verifiable Model}
We consider the case of a homogeneous worker population first, where workers have the same effort function.
We find a tight bound (\ref{ineq-hierarch-c}) on penalty cost $C$ that ensures that a hierarchical scheme provides incentives.
We then extend our results to the case of heterogeneous worker population, where worker effort functions are different, modeling a more realistic setting.
The bound (\ref{ineq-hierarch-c}) will be crucial in distinguishing between workers who are proficient and workers with limited proficiency.
We will show that if a population of workers is proficient on average, then a hierarchical scheme provides
an incentive to be truthful.

\smallskip

\noindent{\bf Homogeneous worker population.} 
We consider a setting where all workers have the same effort function $f$.
We will show that the proposed hierarchical scheme provides incentives for worker to be truthful.
First, to prove our main results, we formulate Lemma~\ref{lemma-binar} that
considers a worker and their superior in a supervision tree.
It computes the expected loss of the worker, and it provides an upper bound on the worker's error probability.

\begin{lemma}
    \label{lemma-binar}
    Let workers $u, w \in U$ have error probabilities $e_u, e_w$ respectively,
    and let worker $w$ be the parent of worker $u$ in a supervision tree
    with branching factor $k$ and penalty cost $C>0$.
    If worker $u$ has effort function $f$ then the expected loss $L(e_u, e_w)$
    of worker $u$ under the supervision of worker $w$ is 
    \begin{align}
        \label{eq-expected-bin-loss}
        L(e_u, e_w) = kf(e_u) + e_u(1-e_w)C + (1-e_u)e_wC + e_ue_wD \eqpun ,
    \end{align}
    where $D$ is a constant from the $[0, C]$ interval.
    Moreover, if there is $\sigma, \varepsilon \in (0, 1/2)$ such that
    $e_w < \varepsilon$ and 
    \begin{align}
        \label{lemma-ineq-C}
        C \ge \frac{f'(\sigma)k}{2\varepsilon - 1} \eqpun ,
    \end{align}
    then every $e^*_u\in \underset{e_u}{\arg\min} L(e_u, e_w)$ satisfies inequality
    $e^*_u < \sigma$ .
  %  \begin{align*}
  %      e^*_u < \sigma \eqpun .
  %  \end{align*}
\end{lemma}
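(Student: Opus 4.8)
The plan is to split the statement into its two halves: derive the loss formula (\ref{eq-expected-bin-loss}) by a direct case analysis, and then obtain the bound on $e_u^*$ by reducing to a one-dimensional convex minimization, exactly in the spirit of the proof of Theorem~\ref{th-lower-bound-p}.

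For the loss formula I would reason as follows. Worker $u$ pays $kf(e_u)$ in effort for its $k$ tasks, and it shares exactly one task with its parent $w$, on which it is charged $C$ precisely when its answer differs from $w$'s answer. Treating the events ``$u$ errs on the shared task'' and ``$w$ errs on the shared task'' as independent (the two workers solve it separately), I split into the four cases according to whether each of $u,w$ is correct. If exactly one of them errs, the two answers necessarily differ, contributing $e_u(1-e_w)C+(1-e_u)e_w C$; if both are correct, the answers agree and there is no charge; if both err, the answers may agree or differ (they must differ when $|A|=2$ but may coincide when $|A|>2$), so the conditional expected charge is some $D:=C\cdot\Pr(a_u\neq a_w\mid\text{both err})\in[0,C]$, contributing $e_ue_wD$. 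Adding these gives (\ref{eq-expected-bin-loss}).

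For the second half, fix $e_w$ and expand: $L(e_u,e_w)=kf(e_u)+\beta\,e_u+Ce_w$ with $\beta:=C(1-2e_w)+De_w$. As a function of $e_u$ this is strictly convex (a strictly convex term plus a linear one), so it has at most one minimizer; I will show it lies in $(0,1)$ and hence satisfies the first-order condition $kf'(e_u^*)+\beta=0$, i.e.\ $f'(e_u^*)=-\beta/k$. To this end I bound $\beta$ from below: since $D\ge0$, $e_w\ge0$, $C>0$ and $e_w<\varepsilon<\tfrac{1}{2}$, we get $\beta\ge C(1-2e_w)>C(1-2\varepsilon)>0$; and multiplying hypothesis (\ref{lemma-ineq-C}) through by the negative quantity $2\varepsilon-1$ rewrites it as $C(1-2\varepsilon)\ge-kf'(\sigma)$. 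Hence $\beta>-kf'(\sigma)$. Since $f'$ is strictly increasing (the derivative of a strictly convex function) and $\sigma<\tfrac{1}{2}<1$, this yields $\beta>-kf'(\sigma)>-kf'(1)$, so $L'(1)=kf'(1)+\beta>0$ and the minimizer is genuinely interior; then $f'(e_u^*)=-\beta/k<f'(\sigma)$ and, again by monotonicity of $f'$, $e_u^*<\sigma$, as claimed.

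The computations are otherwise routine; the only delicate point is the ``both err'' case in the first half — one must notice that, unlike the ``both correct'' case, a coincidental agreement between two wrong answers is possible when there are more than two alternatives, so the charge there is an intermediate constant $D$ rather than $0$. Conveniently, the rest of the argument uses only $D\ge0$, so the exact value of $D$ (which depends on $|A|$ and on how each worker's errors are spread over the wrong answers) never has to be pinned down.
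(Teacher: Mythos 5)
Your proposal is correct and follows essentially the same route as the paper: the identical four-case decomposition of the shared-task penalty (with $D=C\cdot\Pr(a_u\neq a_w\mid\text{both err})$), followed by convexity, the first-order condition, and monotonicity of $f'$ to get $f'(e_u^*)<f'(\sigma)$ and hence $e_u^*<\sigma$. The only (welcome) addition is your explicit check via $L'(1)=kf'(1)+\beta>0$ that the minimizer is interior so the first-order condition applies, a step the paper's proof takes for granted.
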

The proof of Lemma~\ref{lemma-binar} is provided in Appendix~\ref{sec-lemma-proof}.
The following theorem shows that in the proposed hierarchical scheme, we can choose the penalty $C$ so that all
Nash equilibria have the property that every worker plays truthfully.

\begin{theorem}
    \label{th-supervised-tree}
    Let workers be organized into a supervision tree with branching factor $k$.
    If workers are rational and have the  same effort function $f$, and penalty $C$ satisfies the following
    inequality
    \begin{align}
        \label{ineq-hierarch-c}
        C \ge \frac{f'(\varepsilon)k}{2\varepsilon - 1} \eqpun,
    \end{align}
    for $\varepsilon \in (0, 1/2)$,
    then all Nash equilibria have the property that every worker in an equilibrium plays truthfully.

\end{theorem}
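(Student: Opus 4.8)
The plan is to induct on the level $l$ of a worker in the supervision tree, from the top ($l=0$) down to the bottom level of workers ($l = L-2$), showing that in \emph{every} Nash equilibrium, a worker at level $l$ plays with error probability below $\varepsilon$. The base case concerns workers at level $1$, whose parent is the supervisor: by assumption the supervisor is truthful, so she plays with error probability at most some small value, which we may take to be below $\varepsilon$ (indeed arbitrarily small). Then Lemma~\ref{lemma-binar}, applied with $\sigma = \varepsilon$ and $e_w < \varepsilon$, tells us that any loss-minimizing response $e^*_u$ of a level-$1$ worker satisfies $e^*_u < \varepsilon$, provided $C \ge \frac{f'(\varepsilon)k}{2\varepsilon - 1}$ — which is exactly hypothesis~(\ref{ineq-hierarch-c}). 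Since in a Nash equilibrium each worker best-responds to the others, every level-$1$ worker is therefore truthful.

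For the inductive step, suppose every worker at level $l-1$ plays truthfully in the given equilibrium, i.e.\ with error probability below $\varepsilon$. Consider a worker $u$ at level $l$ with parent $w$ at level $l-1$; by the inductive hypothesis $e_w < \varepsilon$. The expected loss of $u$ depends on $u$'s own effort, on the penalties incurred when $u$ and $w$ disagree on their shared task, and — crucially — on the behavior of $u$'s own children, but only through the constant $D \in [0,C]$ appearing in~(\ref{eq-expected-bin-loss}); the \emph{form} of the loss function~(\ref{eq-expected-bin-loss}) and the conclusion of Lemma~\ref{lemma-binar} hold for \emph{any} such $D$. Hence, again invoking Lemma~\ref{lemma-binar} with $\sigma = \varepsilon$, any best response of $u$ satisfies $e^*_u < \varepsilon$. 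Because the configuration is a Nash equilibrium, $u$ does play a best response, so $u$ is truthful. Iterating down to $l = L-2$ covers all workers (workers at level $L-2$ perform the leaf tasks directly, but still share a task with a level-$(L-3)$ parent, so the same argument applies; workers at level $L-2$ are the deepest internal nodes, whose ``children'' are tasks and so contribute no strategic term at all, making their analysis only easier).

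One subtlety to address is that Lemma~\ref{lemma-binar} describes the loss of $u$ \emph{under the supervision of} $w$, i.e.\ assuming the penalty channel from $w$ is active; in the tree each non-root worker has exactly one parent and shares exactly one task with it, so this is precisely the situation, and the effort term $k f(e_u)$ correctly accounts for all $k$ tasks $u$ performs (including the one shared upward and the ones shared downward with $u$'s children). A second point is that the argument shows truthfulness in every Nash equilibrium but does not by itself establish that an equilibrium exists; however, existence is not claimed by the theorem, and in any case follows from standard fixed-point arguments using convexity of each worker's loss (as already noted in the one-level analysis). The main obstacle is really just making sure the induction is set up cleanly — in particular that ``$w$ truthful'' ($e_w < \varepsilon$) is exactly the hypothesis Lemma~\ref{lemma-binar} needs with $\sigma = \varepsilon$, and that the downward dependence on children's strategies is entirely absorbed into the free constant $D$, so that no circularity arises and the induction genuinely proceeds top-down without any assumption on lower levels.
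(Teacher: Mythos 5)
Your proposal is correct and follows essentially the same route as the paper's own proof: a top-down induction on the level, with the supervisor's truthfulness as the base case and Lemma~\ref{lemma-binar} applied with $\sigma = \varepsilon$ and $e_w < \varepsilon$ at each step, using hypothesis~(\ref{ineq-hierarch-c}) to satisfy the lemma's condition on $C$. One small slip worth correcting: the constant $D$ in~(\ref{eq-expected-bin-loss}) has nothing to do with $u$'s children --- it is the penalty in the event that both $u$ and its \emph{parent} $w$ answer incorrectly --- and the reason no circularity arises is simply that $u$'s loss is determined entirely by the comparison with its parent and does not depend on its children's strategies at all.
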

\begin{proof}
    The strategic choice of a player $u$ depends only on the players placed above $u$.
    Thus the proof is by induction on the depth $l = 0, 1, \dots, L-1$  of the tree.
    The inductive hypothesis is that the best response for players at depth
    up to $l$ is to play with a truthful strategy, i.e. with error probability less that $\varepsilon$.
    At depth $0$, the result holds trivially, as the supervisor plays a fixed truthful strategy.

    To prove the inductive step,
    let us consider a worker $u$ at level $l+1$ and the worker's superior
    $w$ at level $l$, with error probabilities $e_u$ and $e_w$ respectively.
    The loss of worker $u$ is solely depends on the superior $w$.
    By the inductive hypothesis, the superior $w$ plays with a truthful strategy and therefore provides an incorrect solution 
    to the common task with probability $e_w < \varepsilon$.
    The conditions of Lemma~\ref{lemma-binar} for worker $u$ with superior $w$ are satisfied: 
    superior $w$ has error probability $e_w < \varepsilon$, and inequality (\ref{lemma-ineq-C}) holds for $\sigma=\varepsilon$.
    Therefore, it follows from Lemma~\ref{lemma-binar} that the best response of worker $u$ is to play
    with error probability $e^*_u$ that is less than $\varepsilon$, i.e. by choosing a truthful strategy.
\end{proof}

We can show that the inequality (\ref{ineq-hierarch-c})  is tight: 
if cost $C$ does not satisfy inequality (\ref{ineq-hierarch-c}), then there exists an effort function, a supervision tree, and a level $l$ such that a rational worker chooses a strategy with error probability greater than $\varepsilon$, that is, not a truthful strategy.

\begin{theorem}
    \label{th-counter-example}
    The bound proved in Theorem~\ref{th-supervised-tree} is tight.
    Precisely, if the cost $C$ does not satisfy inequality (\ref{ineq-hierarch-c}), then there exists an effort function, a supervision tree, and a level $l$ such that a rational worker chooses a strategy with error probability greater than $\varepsilon$.
\end{theorem}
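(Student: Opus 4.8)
The plan is to construct, for any penalty cost $C$ violating inequality~(\ref{ineq-hierarch-c}), an explicit effort function and a two-level supervision tree (supervisor at level $0$, a single worker $u$ at level $1$ performing $k$ tasks, tasks at level $2$) in which the rational best response of $u$ has error probability exceeding $\varepsilon$. The violation of~(\ref{ineq-hierarch-c}) means $C < f'(\varepsilon)k/(2\varepsilon-1)$; note that since $\varepsilon \in (0,1/2)$ we have $2\varepsilon - 1 < 0$, and $f'(\varepsilon) < 0$, so both sides are positive and the inequality $C(2\varepsilon-1) > f'(\varepsilon)k$ is the usable form. First I would invoke Lemma~\ref{lemma-binar} with $w$ the supervisor: the supervisor is truthful, so I may take $e_w$ as small as I like (in the limiting analysis, $e_w \to 0$), and the worker's expected loss is $L(e_u,e_w) = kf(e_u) + e_u(1-e_w)C + (1-e_u)e_w C + e_u e_w D$. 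Differentiating in $e_u$ and setting $e_w$ near $0$, the stationarity condition becomes approximately $kf'(e_u^*) + C(1 - 2e_w) + e_w D \approx kf'(e_u^*) + C = 0$, i.e. $f'(e_u^*) = -C/k$.

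The heart of the argument is then a calculus-of-the-converse to the proof of Theorem~\ref{th-supervised-tree}: I want to choose $f$ so that $f'(e_u^*) = -C/k$ forces $e_u^* \ge \varepsilon$. Keeping track of the $e_w$ terms carefully, the exact stationarity equation from~(\ref{eq-expected-bin-loss}) is
\begin{align}
    \label{counter-stationary}
    kf'(e_u^*) + (1-e_w)C - e_w C + e_w D = 0 \eqpun ,
\end{align}
so $f'(e_u^*) = -\bigl((1-2e_w)C + e_w D\bigr)/k$. As $e_w \to 0^+$ this tends to $-C/k$. Since $f'$ is increasing (strict convexity), $e_u^* \ge \varepsilon$ is equivalent to $f'(e_u^*) \ge f'(\varepsilon)$, i.e. to $-C/k \ge f'(\varepsilon)$, i.e. to $-C \ge k f'(\varepsilon)$, i.e. to $C \le -k f'(\varepsilon)$. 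But this is exactly what fails in general — so a single clean inequality on $C$ alone is \emph{not} what drives the counterexample; what matters is the interaction with the allowed range of $e_w$ and the fact that the superior at a deep level need not be arbitrarily accurate. This points to the right construction: rather than a two-level tree, use a tree deep enough that I can let the error probabilities grow from level to level. Concretely, I would pick an effort function $f$ for which the stationarity map $e_w \mapsto e_u^*(e_w)$ — the best response of a child whose parent has error $e_w$ — is a strictly increasing function with $e_u^*(e_w) > e_w$ for all $e_w$ in a neighborhood of $\varepsilon$, engineered precisely by violating~(\ref{lemma-ineq-C}) with $\sigma = \varepsilon$. Iterating this map down the tree, the error probabilities form an increasing sequence, and for a tree of sufficient depth $L$ some level $l$ has $e^*_u > \varepsilon$, which is the conclusion.

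The cleanest route is probably to exhibit $f$ in closed form. I would take $f$ piecewise or of the form $f(e) = a/e + b$ or $f(e) = -\beta \ln e$ (both strictly convex, decreasing, differentiable on $(0,1]$), choose the free parameters so that $f'(\varepsilon) = -C(2\varepsilon - 1)/k \cdot (1+\delta)$ for a small $\delta > 0$ — legitimate precisely because $C$ violates~(\ref{ineq-hierarch-c}), so $f'(\varepsilon)$ as forced here is a strictly smaller (more negative) number than what the bound permits — and then check that the induced best-response recursion on the error probabilities, started from the supervisor's tiny $e_w$, escapes past $\varepsilon$ within finitely many levels. The main obstacle I anticipate is bookkeeping the $e_w$-dependent terms in~(\ref{counter-stationary}): the term $e_w D$ with $D \in [0,C]$ is not fully pinned down, so I must verify that the escape of the error sequence past $\varepsilon$ is robust over all admissible $D$, and I must confirm the candidate $f$ really is globally strictly convex on all of $(0,1]$ and not merely near $\varepsilon$. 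Once the function and the depth are fixed, the remaining verification that the worker at the critical level strictly prefers error probability above $\varepsilon$ is a direct substitution into~(\ref{eq-expected-bin-loss}).
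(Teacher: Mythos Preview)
Your approach converges on the paper's: after correctly discarding the two-level attempt, you settle on a deep tree with a logarithmic effort function and argue that best-response errors cascade past $\varepsilon$. Two concrete pieces are missing, though.

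First, the $D$-ambiguity you flag as ``the main obstacle'' is resolved in the paper not by a robustness argument over $D \in [0,C]$ but by a structural choice: take the solution set $A$ to have exactly two elements. Then whenever both worker and superior err they necessarily report the \emph{same} wrong answer, so no penalty is incurred and $D=0$ identically. This collapses the loss to $L(e_u,e_w) = kf(e_u) + e_u(1-e_w)C + (1-e_u)e_wC$, and the stationarity condition becomes the clean recursion $f'(e_t) = (2e_{t-1}-1)C/k$. Without this device you would have to argue uniformly in $D$; since larger $D$ only \emph{strengthens} the incentive (it makes $f'(e_u^*)$ more negative, hence $e_u^*$ smaller), the adversarial case is $D=0$ anyway --- but pinning it down via $|A|=2$ is a genuine idea you are missing, not just bookkeeping.

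Second, the recursion analysis you defer is the actual content of the proof. With $f(x) = -\ln x$ (so $f'(x) = -1/x$) the recursion is $e_t = k/\bigl((1-2e_{t-1})C\bigr)$, and the paper shows $e_t - e_{t-1} \ge k/C + 2e_{t-1}^2 - e_{t-1}$, which under the violated bound is bounded below by a fixed $\Delta > 0$ whenever $e_{t-1} \le \varepsilon$ (the paper further restricts to $\varepsilon < 1/4$ so that $e \mapsto 2e^2 - e$ is monotone on $[0,\varepsilon]$). Hence $e_t$ climbs by at least $\Delta$ per level until it exceeds $\varepsilon$, at depth $l > \varepsilon/\Delta$. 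Your parameter-tuning line $f'(\varepsilon) = -C(2\varepsilon-1)/k\cdot(1+\delta)$ has a sign slip (the right-hand side is positive while $f'(\varepsilon)<0$) and is in any case unnecessary: once $f = -\ln$ is fixed, the assumed violation of~(\ref{ineq-hierarch-c}) already supplies the margin you want, in the form $C = f'(\varepsilon)k/(2\varepsilon-1) - \delta$ for some $\delta>0$.
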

%
%    We examine a setting where the opposite of inequality (\ref{ineq-hierarch-c}) holds.
%    Assume that workers are organized into a supervision tree with
%    branching factor $k$, and that solution set $A$ consists of 2 elements only.
%    Let penalty cost $C$ satisfy inequality
%    \begin{align*}
%        C < \frac{f'(\varepsilon)k}{2\varepsilon - 1}
%    \end{align*}
%    for $\varepsilon \in (0, 1/4)$ and effort function $f(x) = -\ln(x)$. 
%    Then we can show that there exists depth $l$ such that any tree with depth greater
%    or equal $l$ does not provide an incentive to be truthful:
%    a rational worker on level $l$ chooses error probability $e$ such that
%    $e > \varepsilon$.

A detailed proof of Theorem~\ref{th-counter-example} is provided in Appendix~\ref{th-counter-example-proof}.
\iffalse
    \begin{proof} (Sketch)
        We examine a setting where the opposite of inequality (\ref{ineq-hierarch-c}) holds.
        Assume that workers are organized into a supervision tree with
        branching factor $k$, and that solution set $A$ consists of 2 elements only.
        Let penalty cost $C$ satisfy the opposite of inequality (\ref{ineq-hierarch-c})
        for $\varepsilon \in (0, 1/4)$ and effort function $f(x) = -\ln(x)$. 
        Then we can show that there exists depth $l$ and a tree with depth greater
        or equal $l$ that does not provide an incentive to be truthful:
        a rational worker on level $l$ chooses error probability $e$ such that
        $e > \varepsilon$.
        A detailed proof is provided in Appendix~\ref{th-counter-example-proof}.
    \end{proof}
\fi

\smallskip

\noindent{\bf Heterogeneous worker population.}
We study the scenario when workers have different effort functions.
A worker in a supervision tree knows their own effort function;
however, they do not know effort functions of other workers.
Thus, strategic interactions between workers can be formulated as a game with incomplete information about effort functions.
We introduce the definition of a {\em proficient} worker and we show that if workers 
are proficient on average, then all {\em interim} Bayes-Nash equilibria have 
the property that every proficient worker plays truthfully.

Proficiency of a worker $u$ is determined by their effort function $f_u$.
A more proficient worker spends less effort to achieve the same error probability,
compared to a less proficient worker.
The fact that the lower bound (\ref{ineq-hierarch-c}) is tight,
allows us distinguish between a proficient worker and a worker with limited proficiency.

\begin{definition}
    \label{def-prof-worker}
    Given $\varepsilon \in (0, 1/2)$ and penalty $C>0$,
    worker $u\in U$ with effort function $f_u$ is {\em proficient} if the solution $\sigma_u$ of equation
    \begin{align}
        \label{eq-prof-level}
        C = \frac{f'_u(\sigma_u)k}{2\varepsilon - 1} \eqpun
    \end{align}
    satisfies inequality $\sigma_u \le \varepsilon$.
\end{definition}

\begin{comment}
To simplify technical details, we impose conditions on effort functions $f_u, u\in U$ to guarantee that there
exists a unique solution $\sigma_u$.
\end{comment}
First, we naturally assume that the more one tries to approach a zero probability of mistake, the more effort one needs to make, while having precisely zero probability would require infinite effort. This assumption translates into the following conditions: $\lim_{e\to +0} f_u(e) = +\infty$ and  $\lim_{e \to +0}f'_u(e)= -\infty$.
Secondly, random guessing requires no effort: $\lim_{e\to 1/2}f_u(e) = 0$ and $\lim_{e\to 1/2}f'_u(e) = 0$.
The two natural assumptions guarantee that the derivative of $f_u$ lies in the $(-\infty, 0)$ range.
Moreover, the differentiablity and the convexity of $f_u$ implies the continuity of $f_u'$ \cite{rockafellar},
therefore, the continuity of $f_u'$ and the strict convexity of $f_u$ guarantees that the derivative takes any value in that range only at one point, that is, there are no two points with the same derivative.
Thus, equation (\ref{eq-prof-level}) always has a unique solution.

Definition~\ref{def-prof-worker} is justified by the following observation.
Based on Lemma~\ref{lemma-binar}, value $\sigma_u$ is the upper bound on the best response $e^*_u$ of worker $u$ under a supervision of a truthful worker.
Therefore, error probability $e^*_u$ of a {\em proficient} worker satisfies inequality $e^*_u < \sigma_u \le \varepsilon$.
This means that a {\em proficient} worker is truthful under a truthful superior.
On the other hand, if a worker has limited proficiency, i.e. $\sigma_u > \varepsilon$, then, as we will show,
the opposite of inequality (\ref{ineq-hierarch-c}) holds;
and in this case, according to Theorem~\ref{th-counter-example}, a worker with that effort function in a supervision tree
is not guaranteed to be truthful.
We obtain the opposite of inequality (\ref{ineq-hierarch-c}) from inequality $\sigma_u > \varepsilon$ by using
the fact that $f'_u$ is an increasing function as a derivative of a strictly convex function and the assumption that
$\varepsilon < 1/2$:
\begin{align*}
    \sigma_u > \varepsilon \Rightarrow f'_u(\sigma_u) > f'_u(\varepsilon) \Rightarrow \frac{f'_u(\sigma_u)k}{2\varepsilon - 1} < \frac{f'_u(\varepsilon)k}{2\varepsilon - 1} \Rightarrow C < \frac{f'_u(\varepsilon)k}{2\varepsilon - 1} \eqpun .
\end{align*}

To reason about the diverse proficiency of workers in a population,
we assume that worker effort functions are distributed according to a probability distribution $\mathcal{F}$.
We call a population of workers proficient if the following inequality holds
\begin{align}
        \label{theorem-ineq-popul-assumpt}
        \EE_{f\sim \mathcal{F}}[\sigma_f] \le \varepsilon \eqpun ,
\end{align}
where $\sigma_f$ is the solution of equation (\ref{eq-prof-level}) with a function $f$.

Note that a successful incentive scheme cannot guarantee that workers with limited proficiency have
an incentive to be truthful, because it might take too much effort for such workers to have error probability
smaller than $\varepsilon$.
%Workers in a supervision tree know their own effort functions, but they do not know effort function of other workers.
%In a supervision tree, workers play a game with incomplete information about other players type. 
The following theorem shows that all {\em interim} Bayes-Nash equilibria have the property that every proficient
worker plays with a truthful strategy.

\begin{theorem}
    \label{th-binarverif-hetero}
    Let workers be organized into a supervision tree with branching factor k
    and penalty $C> 0$. Let $\varepsilon \in (0, 1/2)$ and let $\mathcal{F}$ be a distribution of worker effort functions such that workers are proficient on average, i.e. inequality (\ref{theorem-ineq-popul-assumpt}) holds,
%    \begin{align}
%        \label{theorem-ineq-popul-assumpt}
%        \EE_{f\sim \mathcal{F}}[\sigma_f] \le \varepsilon \eqpun ,
%    \end{align}
    where $\sigma_f$ is the solution of equation (\ref{eq-prof-level}) with effort function $f$.
%    \begin{align}
%        \label{theorem-eq-sigma}
%        C = \frac{f(\sigma_f)k}{2\varepsilon - 1} \eqpun .
%    \end{align}
    If workers are rational, then all {interim} Bayes-Nash equilibria
    have the property that every proficient worker plays truthfully.
\end{theorem}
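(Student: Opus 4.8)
The plan is to imitate the induction in the proof of Theorem~\ref{th-supervised-tree}, descending the levels of the supervision tree, but now tracking at each level the \emph{average} error probability of a worker, the average being taken over the worker's type drawn from $\mathcal{F}$ and its equilibrium behaviour. Fix an arbitrary interim Bayes--Nash equilibrium and, for each level $l$, let $\bar e^{(l)}$ denote this average error probability. The inductive claim is that $\bar e^{(l)} < \varepsilon$ for every $l$. The base case $l=0$ holds because the root of the tree is the supervisor, who plays a fixed truthful strategy, so $\bar e^{(0)} = e_0 < \varepsilon$.

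For the inductive step, the first move is to recognize a worker's best-response problem as an instance of Lemma~\ref{lemma-binar}. Since the supervision tree is a tree, a worker $u$ at level $l+1$ is reviewed only by its unique parent $w$ at level $l$, so $u$'s payoff depends on the rest of the profile solely through $w$, and for given error probabilities $e_u,e_w$ its expected loss is exactly (\ref{eq-expected-bin-loss}) with $f=f_u$. That expression is affine in $e_w$ and in the coincident-mistake constant $D \in [0,C]$. In the fixed equilibrium, $u$'s belief about $w$ is: type $f_w \sim \mathcal{F}$ (independently of $u$'s own type), playing the level-$l$ equilibrium strategy, which depends only on level and type because workers know only their level; hence this belief has mean error probability $\bar e^{(l)}$. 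Taking expectations, $u$'s expected loss again has the form (\ref{eq-expected-bin-loss}) with $e_w$ replaced by $\bar e_w := \bar e^{(l)}$ and $D$ by some $\bar D \in [0,C]$. This function is strictly convex in $e_u$ (the strictly convex term $kf_u(e_u)$ plus an affine term), so $u$ has a unique best response $e^*_u(f_u)$, interior to $(0,1/2)$ by the boundary behaviour of $f_u$ and $f_u'$, characterized by the first-order condition $kf_u'(e^*_u) + (1-2\bar e_w)C + \bar e_w\bar D = 0$.

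Now compare this with the defining equation (\ref{eq-prof-level}) of $\sigma_{f_u}$, which reads $kf_u'(\sigma_{f_u}) = (2\varepsilon-1)C$. Using $\bar e_w = \bar e^{(l)} < \varepsilon < \tfrac12$, $\bar D \ge 0$ and $C>0$, we get $kf_u'(e^*_u) = -(1-2\bar e_w)C - \bar e_w\bar D < -(1-2\varepsilon)C = kf_u'(\sigma_{f_u})$, and since $f_u'$ is increasing (as the derivative of a strictly convex function) this yields $e^*_u(f_u) < \sigma_{f_u}$ for \emph{every} type $f_u$. Averaging over $f_u \sim \mathcal{F}$ and invoking the population-proficiency hypothesis (\ref{theorem-ineq-popul-assumpt}) gives $\bar e^{(l+1)} = \EE_{f\sim\mathcal{F}}[e^*_u(f)] < \EE_{f\sim\mathcal{F}}[\sigma_f] \le \varepsilon$, which closes the induction. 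Finally, if $u$ is a \emph{proficient} worker, then $\sigma_{f_u} \le \varepsilon$ by Definition~\ref{def-prof-worker}; $u$ sits at some level $l+1$ whose parent level $l$ satisfies $\bar e^{(l)} < \varepsilon$, so by the inequality just displayed $e^*_u(f_u) < \sigma_{f_u} \le \varepsilon$, i.e. $u$ plays truthfully.

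The analytic content — the reduction to Lemma~\ref{lemma-binar}, the comparison of first-order conditions, and the averaging step — is routine. I expect the main obstacle to be the game-theoretic bookkeeping behind the single scalar $\bar e^{(l)}$: one must spell out that in an interim Bayes--Nash equilibrium a level-$(l+1)$ worker's relevant belief about its parent reduces to this scalar, which relies on the independence of types across the tree, on workers conditioning only on their level, and on each worker's payoff depending on the profile only through its parent. A secondary point is to carry the \emph{strict} inequality $<\varepsilon$ all the way down from the strictly truthful supervisor, so that the conclusion is genuine truthfulness ($e^*_u < \varepsilon$) rather than merely $e^*_u \le \varepsilon$.
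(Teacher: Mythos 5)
Your proposal is correct and follows essentially the same route as the paper's proof: an induction down the tree that exploits the affinity of the loss (\ref{eq-expected-bin-loss}) in $e_w$ to replace the unknown superior by one with the average error probability, combines the per-type bound $e^*_u < \sigma_{f_u}$ (obtained from the first-order condition, i.e.\ the second part of Lemma~\ref{lemma-binar} applied with $\sigma=\sigma_{f_u}$) with the population-proficiency hypothesis (\ref{theorem-ineq-popul-assumpt}) to keep the average below $\varepsilon$. The only cosmetic difference is that you carry the aggregate $\bar e^{(l)}<\varepsilon$ as the inductive invariant, whereas the paper carries the per-type bound $e^*_w<\sigma_{f_w}$ and averages inside the inductive step.
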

\begin{proof}
    In a supervision tree, the strategic choice of a player $u$ depends only on the players placed above $u$.
    We prove by induction of depth $l = 1, \dots, L-1$  of the tree that the best response of worker $u$ is to
    play with error probability $e^*_u$ such that $e^*_u < \sigma_u$, where $\sigma_u$ is the solution of
    equation (\ref{eq-prof-level}) with the worker's effort function $f_u$.
    Thus, by the definition of a proficient worker, it will follow that the best response of a proficient worker $u$
    is to play with  $e^*_u<\sigma_u \le \varepsilon$, i.e. to play truthfully.

    At level 1, a worker $u$ is evaluated by the supervisor who has error probability $e < \varepsilon$.
    Conditions of Lemma~\ref{lemma-binar} for worker $u$ are satisfied with $f=f_u$ and $\sigma = \sigma_u$.
    Thus, the best response of worker $u$ is to play with $e^*_u < \sigma_u$.

    To prove the inductive step,
    let us consider a worker $u$ at level $l+1$ and her superior
    $w$ at level $l$, with error probabilities $e_u$ and $e^*_w$ respectively.
    By the inductive assumption, $e^*_w < \sigma_w$.
    According to the first part of Lemma~\ref{lemma-binar}, the superior $w$ induces loss $L(e_u, e^*_w)$ to the worker $u$ that is defined by equation (\ref{eq-expected-bin-loss}).
    However, worker $u$ does not know the effort function of the superior.
    Thus, the loss of worker $u$ is computed as the expectation of $L(e_u, e^*_w)$ over different effort functions of the superior
    \begin{align*}
        \EE_{f_w \sim \mathcal{F}} L(e_u, e^*_w) &= \EE_{f_w \sim \mathcal{F}}[k f(e_u) + (1 - e_u)e^*_wC + e_u (1- e^*_w) C + e_u e^*_w D] \\
        &= k f(e_u) + (1 - e_u)\EE_{f_w \sim \mathcal{F}}[e^*_w]C + e_u (1- \EE_{f_w \sim \mathcal{F}}[e^*_w]) C + e_u \EE_{f_w \sim \mathcal{F}}[e^*_w] D \\
        &= L(e_u, \EE_{f_w \sim \mathcal{F}}[e^*_w]) \eqpun .
    \end{align*}
    We use $e^*$ to denote $\EE_{f_w \sim \mathcal{F}}[e^*_w]$.
    Minimizing the loss $\EE_{f_w \sim \mathcal{F}} L(e_u, e^*_w)$ with respect to $e_u$ is equivalent to
    minimizing the loss $L(e_u, e^*)$.
    Therefore, the strategic choice $e^*_u$ of worker $u$ in the presence of uncertainty about the type of the superior is equivalent to the strategic choice of worker $u$ with a superior that has error probability $e^*$.
    We show that $e^* < \varepsilon$ by using the inductive assumption $e^*_w < \sigma_w$ and 
    assumption (\ref{theorem-ineq-popul-assumpt}), that the workers are proficient on average:
    \begin{align*}
        e^* = \EE_{f_w \sim \mathcal{F}}[e^*_w] < \EE_{f_w \sim \mathcal{F}}[\sigma_w] \le \varepsilon \eqpun .
    \end{align*}
    The conditions of Lemma~\ref{lemma-binar} are satisfied for the worker $u$ with $f=f_u$, $\sigma=\sigma_u$, and $e_w=e^*$.
    It follows from Lemma~\ref{lemma-binar} that the best response of worker $u$ is $e^*_u < \sigma_u$.
    This finishes the inductive step and the proof of the theorem.
\end{proof}

\smallskip
\noindent{\bf What information do workers need?}
%\subsection{What information do workers need?}
The schemes considered in this section organize workers into hierarchies. 
What information do workers need to know about the hierarchy, as they set to do their work? 
Do they need to be given the precise hierarchical scheme, including the names (or identities) of their supervisors? 
Or can they just be told that a hierarchy exists, without being told even what their place in it is? 
The interest in these questions lies in the fact that revealing to workers the identity of those above and below them in the hierarchy could create incentives to communicate via secondary channels and sway the outcome. 

It turns out that the answer is somewhere in between: while workers do not need to know the identities of the workers above and below them in the hierarchy, they do need to know the level in which they are. 
The following pair of theorems makes this observation precise. 

We denote the pure defection strategy where a worker always reports the same solution for {\em any} tasks as $\xi$.

\begin{theorem}
    Assume workers are organized into a supervision tree but they are not told their level in the tree.
    Then, for each $\varepsilon > 0$, there are supervision trees where defecting with a constant strategy is a Nash equilibrium with a loss smaller than any truthful strategy.
\end{theorem}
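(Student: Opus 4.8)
Fix $\varepsilon\in(0,1/2)$ (the case $\varepsilon\ge\tfrac12$ is analogous, using a larger solution alphabet so that the constant strategy still errs with probability $\ge\varepsilon$). The plan is to exhibit a family of supervision trees in which the symmetric profile where every worker plays the constant strategy $\xi$ --- always reporting one fixed solution $a_0$ --- is a Nash equilibrium whose expected loss tends to $0$ as the tree grows, whereas every truthful strategy has loss bounded below by a fixed positive constant. Concretely, I would take a binary solution set $A=\{a_0,a_1\}$ in which each task's true solution is $a_0$ or $a_1$ with probability $\tfrac12$ (so $\xi$ has error $\tfrac12>\varepsilon$, hence is not truthful), a branching factor $k\ge2$, and an effort function $f$ with $f(\varepsilon)>0$ (while the constant strategy $\xi$ does no work and so incurs zero effort cost). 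Then consider the complete $k$-ary supervision tree of depth $L$, with $L$ to be fixed large. Since a worker is not told its level, model its belief about its position as uniform over the internal (worker) nodes: there are $k^l$ workers at level $l$ for $l=1,\dots,L-2$, so the prior probability $p_1$ of sitting at level $1$ equals $k/\sum_{l=1}^{L-2}k^l$, which tends to $0$ as $L\to\infty$.

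First I would bound the two quantities being compared. Against the all-$\xi$ profile, a worker at any level $l\ge2$ shares its upward task with a parent that also reports $a_0$, so they always agree, the worker pays no effort, and its loss is $0$; only a level-$1$ worker, whose superior is the truthful supervisor, can be penalized, and by the loss formula of Lemma~\ref{lemma-binar} (with $e_u=\tfrac12$ and the supervisor's small error probability in the role of $e_w$) its loss is at most $C$. Hence the expected loss of $\xi$ is at most $p_1 C$. By contrast, any truthful strategy has error $e<\varepsilon$, hence effort cost $kf(e)>kf(\varepsilon)>0$, so (penalties being nonnegative) its loss exceeds the fixed positive constant $kf(\varepsilon)$. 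Choosing $L$ large enough that $p_1 C<kf(\varepsilon)$ then gives the ``loss smaller than any truthful strategy'' part of the claim.

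The harder step is to check that $\xi$ is actually a best response to the all-$\xi$ profile --- i.e.\ that a worker cannot profitably hedge against the possibility of being at level $1$. Fix an alternative strategy $s$ with error $e_s$. If $e_s\ge\tfrac12$, replacing $s$ by $\xi$ weakly decreases every term of the expected loss, so assume $e_s<\tfrac12$ and compare $s$ to $\xi$ position by position. At levels $\ge2$ the parent reports $a_0$ constantly, so switching from $s$ to $\xi$ removes both the effort $kf(e_s)\ge0$ and the disagreement penalty $q_s C$, where $q_s$ is the probability that $s$ reports a value $\neq a_0$. At level $1$ switching costs at most $\rho(\tfrac12)-\rho(e_s)$, where $\rho(e)$ denotes the level-$1$ penalty as a function of the worker's own error; a direct computation from the loss formula of Lemma~\ref{lemma-binar} gives $\rho(\tfrac12)-\rho(e_s)\le(\tfrac12-e_s)C$. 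The key observation is that, since each task's answer is $a_0$ or $a_1$ equiprobably, any strategy of error $e_s$ must report $a_1$ with probability at least $\tfrac12-e_s$, so $q_s\ge\tfrac12-e_s$. Combining these, the net change in expected loss from replacing $s$ by $\xi$ is at most
\[
  \bigl(p_1-(1-p_1)\bigr)\bigl(\tfrac12-e_s\bigr)\,C \;-\; kf(e_s),
\]
which is $\le0$ once $L$ is large enough that $p_1\le\tfrac12$. Hence $\xi$ is a Nash equilibrium, and together with the previous paragraph this proves the theorem.

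The conceptual core is immediate --- a deep tree makes level $1$ a negligible fraction of the workers, so constant defection is almost free, whereas any truthful strategy pays a fixed positive effort cost. The hard part will be the best-response verification, and in particular establishing the bound $q_s\ge\tfrac12-e_s$ that ties any reduction of the error below $\tfrac12$ to an unavoidable probability of disagreeing with a constant-reporting parent; this is exactly what prevents a worker from cheaply hedging against being at level $1$, and hence what makes constant defection a genuine equilibrium rather than merely a low-loss strategy.
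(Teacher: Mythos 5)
Your proposal is correct and follows essentially the same route as the paper's own proof: a worker who is not told its level assigns only a vanishing prior probability (the paper's $k/N$, your $p_1$) to being reviewed by the supervisor, so agreeing with the defecting majority is nearly free while deviating incurs a near-certain disagreement penalty plus the effort cost. Your write-up is in fact more careful than the paper's, which only compares the two extreme pure strategies and derives the threshold $N>2k$, whereas you verify the best-response property against arbitrary strategies via the bound $q_s \ge \tfrac{1}{2}-e_s$ and separately establish the loss gap against every truthful strategy.
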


\begin{proof}
    Let $N$ and $k$ be the number of players and the tree branching factor respectively.
    We analyze strategic choices of a worker $u\in U$ when all other workers $U\backslash u$ defect and play with strategy $\xi$.
    The worker can play a mix of the following two pure strategies.
    One strategy consists in playing the fixed move. 
    This strategy carries a cost when the worker picks the wrong outcome and is reviewed by the supervisor; this happens with probability $k / N$. 
    Thus, the expected cost of this strategy is bound by $k C / N$. 
    The other strategy consists in playing an outcome that differs from the constant being played by defectors. 
    Even leaving aside the cost of finding out the truth, this strategy carries a cost $(N - k) C / N$.
    So when $(N - k) C / N > k C / N$, or $N > 2k$, it is convenient to defect. 
    The result is intuitive: it is convenient to defect when the probability of being reviewed by another defector is larger than the probability of being reviewed by the single supervisor.
\end{proof}

The following theorem essentially says that telling workers their level in the hierarchy is the minimum and sufficient amount of information required to ensure that collaborating is the only Nash equilibrium. 

\begin{theorem}
If there is a fixed upper bound $k$ to the number of tasks that a worker is assigned, then the smallest amount of information a worker needs to know about the hierarchy 
to have an incentive to play with the truthfully strategy is $\Theta(
log \log N)$, where
$N$ is the number of players in the hierarchy, and $\Theta()$ is the big-Theta notation of complexity theory.
\end{theorem}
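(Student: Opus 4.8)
The statement is a matching pair of bounds, so the plan is to prove separately that $O(\log\log N)$ bits of information suffice and that $\Omega(\log\log N)$ bits are necessary. I rely on the fact that the hierarchical scheme of Section~\ref{sec-hierarchical} builds an (essentially) balanced supervision tree with branching factor at most $k$, so its depth is $L=\Theta(\log_k N)$ and each worker's level is an integer in $\{0,\dots,L-1\}$, which can be written with $\lceil\log_2 L\rceil=\Theta(\log\log N)$ bits.

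\smallskip\noindent\textbf{Sufficiency.} The supervisor tells each worker only the index of its own level. The rest is a verbatim repeat of the induction used in the proof of Theorem~\ref{th-supervised-tree} (or of Theorem~\ref{th-binarverif-hetero} in the heterogeneous setting): at level $0$ the supervisor is truthful by hypothesis, and a worker at level $l+1$, knowing that its parent sits at level $l$ --- which by the inductive hypothesis plays a truthful strategy in every Nash equilibrium --- applies Lemma~\ref{lemma-binar} to conclude that its unique best response is truthful. Hence every Nash equilibrium is all-truthful, and the amount of information communicated is $\lceil\log_2 L\rceil=\Theta(\log\log N)$ bits. Only the parent's level enters the argument, and the worker deduces it from its own, so no identities are needed.

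\smallskip\noindent\textbf{Necessity.} Suppose each worker is given only an ``advice value'' in $\{1,\dots,2^b\}$, assigned as an arbitrary function of the worker's position in the tree; I would show $b=\Omega(\log\log N)$ is forced already by the balanced $k$-ary tree $T$ of depth $L$. The reduction is to best-response dynamics started from the all-defecting profile. Two facts, distilled from the preceding theorem and from Lemma~\ref{lemma-binar}, are needed: (i) against a parent playing the constant strategy $\xi$, a worker's strict best response is to play the same constant (zero effort, never penalized); and (ii) against a truthful parent, a worker's best response is truthful exactly when its posterior probability that the parent is truthful exceeds some threshold $\rho_0\in(0,1)$ that depends only on $C$, $\varepsilon$, $k$ and the distribution of correct answers --- not on $N$ or on $T$. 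Since a worker's action depends only on its advice value, all workers with a given advice value behave alike; if $p_a(S)$ denotes the probability (under the uniform prior over positions carrying advice $a$) that the parent's advice lies in $S$ or the parent is the supervisor, then the profile ``workers with advice in $S$ are truthful, the rest play $\xi$'' is a Nash equilibrium iff $p_a(S)\ge\rho_0$ for every $a\in S$ and $p_a(S)<\rho_0$ for every $a\notin S$. Starting from $S_0=\emptyset$ and iterating $S_{j+1}=S_j\cup\{a:p_a(S_j)\ge\rho_0\}$ gives a monotone sequence stabilizing at a fixed point $S_\infty$, and a short argument shows that all-truthful is the unique Nash equilibrium \emph{iff} $S_\infty=\{1,\dots,2^b\}$; otherwise the ``$S_\infty$-truthful'' profile is a non-truthful equilibrium.

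\smallskip\noindent\textbf{Necessity: the counting step, and the main obstacle.} It remains to show $S_\infty\ne\{1,\dots,2^b\}$ when $b$ is too small. Here I use that level $\ell$ of $T$ carries $k^{\ell-1}$ workers, so a constant fraction $\tfrac{k-1}{k}$ of all $N$ workers sit at the bottom level; certifying their advice values requires certifying their parents first, those require their grandparents, and so on, so that $S_1\subsetneq S_2\subsetneq\cdots$ must ``reach down'' through all $\Theta(L)$ levels, introducing new advice values as it descends. Making this lose only a bounded factor per level --- roughly, among the $\le 2^b$ advice values used at a level, one is carried by a $\ge 2^{-b}$ fraction of that level's (exponentially many) workers, and its certification forces an earlier-certified value carried by a comparable fraction one level up --- would yield a chain of $\Omega\big(L/(1+b)\big)$ distinct advice values certified at strictly decreasing steps, hence $2^b=\Omega\big(L/(1+b)\big)$ and therefore $b=\Omega(\log L)=\Omega(\log\log N)$. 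The step I expect to be most delicate is exactly this chain argument: the advice is chosen adversarially per worker rather than per level, so one has to rule out clever ``spreadings'' of $o(L)$ values over the geometrically growing levels of $T$; and one must verify that the threshold $\rho_0$ in (ii) is a genuine constant --- that it does not silently degrade toward $0$ as $N\to\infty$ --- since the whole reduction hinges on a single $N$-independent threshold.
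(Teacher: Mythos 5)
Your sufficiency direction coincides with the paper's: communicate each worker only its level index, which for a branching-factor-$k$ tree of depth $\Theta(\log_k N)$ costs $\Theta(\log\log N)$ bits, and rerun the induction of Theorem~\ref{th-supervised-tree}. The necessity direction, however, contains a genuine gap, and you have located it yourself: the ``chain argument'' that is supposed to force $2^b=\Omega\bigl(L/(1+b)\bigr)$ is only sketched, and, as you concede, since the advice is assigned per worker rather than per level, nothing in your outline excludes an adversarial spreading of $o(L)$ advice values over the $\Theta(L)$ geometrically growing levels that keeps every class's probability of a certified parent above your threshold $\rho_0$. Until that combinatorial step is actually carried out, the lower bound is not established; the fixed-point machinery ($S_0=\emptyset$, $S_{j+1}=S_j\cup\{a:p_a(S_j)\ge\rho_0\}$) is a reasonable formalization of ``certification must propagate,'' but it does the easy part of the work and leaves the hard part open.

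The paper closes this direction with a different and much shorter counting argument that bypasses levels entirely. Let $C_1,\ldots,C_m$ be the classes of workers induced by the advice bits, sorted by increasing size. Fewer than $\Theta(\log\log N)$ bits means $m$ grows more slowly than $\log N$; since the product of the consecutive size ratios $|C_{j+1}|/|C_j|$ is at least $|C_m|/|C_1|\ge N/m$, for every $\gamma>0$ there must, for large $N$, be an adjacent pair with $|C_j|<\gamma\,|C_{j+1}|$. A worker in $C_{j+1}$ is exchangeable with every other worker carrying the same advice, and the workers in $C_1\cup\cdots\cup C_j$ can collectively review at most $k^2|C_1\cup\cdots\cup C_j|$ workers, so the probability that a given member of $C_{j+1}$ is reviewed by that set is $O\bigl(k^2 m\,|C_j|/|C_{j+1}|\bigr)$, which can be driven below any fixed threshold. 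By the same reasoning as in the preceding theorem (defection is preferable once the probability of being checked by a truthful reviewer is small enough), those workers defect, so some non-truthful equilibrium survives. Note that this argument still relies implicitly on your point (ii) --- an $N$-independent probability threshold separating ``truthful is best'' from ``defect is best'' --- but it never needs the per-level chain that your proposal could not complete; sorting the classes by size and applying the pigeonhole principle to the size ratios is exactly the device that replaces it.
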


\begin{proof}
If we can give workers $\Theta(\log\log N)$ information or more, then we can tell them their level in the hierarchy, and the induction argument in Theorem~\ref{th-supervised-tree} applies.

Conversely, assume that we give fewer than $\Theta(\log\log N)$ bits of information to workers, and consider the situation for $N \rightarrow \infty$. 
The bits given out would induce a partition $C_1, C_2, \ldots, C_m$ of the workers, where workers receiving the same bits would belong to the same class.  
Assume that the partition classes are sorted according to size, so that 
$|C_1| < |C_2| < \cdots < |C_m|$.
As the number of bits is smaller than $\Theta(\log\log N)$, for every $\gamma > 0$, there are $n$ and $j$ so that $|C_j| < \gamma |C_{j+1}|$. 
In other words, as the number of classes is less than logarithmic in $N$, as $N$ grows, there must be arbitrarily large gaps in the ratios between sizes of adjacent classes. 
This implies that, for workers in $C_{j+1}$ as above, the probability of being reviewed by a worker in levels $C_1 \union \cdots \union C_j$ can become arbitrarily small, since those workers can check on at most 
$k^2 |C_1 \union \cdots \union C_j|$ workers below them. 
Thus, defecting becomes the preferred strategy by some of the workers if fewer than $\Theta(\log \log N)$ bits are communicated to the workers. 
\end{proof}

\iffalse
    \subsubsection{Can supervisors be lenient?}

    We consider a scenario when every superior in a supervision tree 
    evaluates at least two tasks by their subordinates, and we ask the question: can a superior forgive a worker when only one task is performed incorrectly, and only penalize when two or more tasks are performed incorrectly? 
    Unless a very high error probability is acceptable, the answer is negative. 
    In fact, if the superior forgives one task then a strategic worker will always deviate on one task.
    Indeed, when the worker diligently completes all but one of its tasks,
    the worker can reduce the cost by defecting on the last task, returning a random answer, without risk of penalty.
\fi

\subsection{The Quantitative Model}
    In the previous section we considered hierarchical schemes in the binary-verifiable setting.
    Workers report either a correct or incorrect answer in a task.
    We showed that hierarchical schemes provide incentives to be truthful in homogeneous worker populations if
    the penalty cost $C$ is large enough.
    We also extended the result to the case of heterogeneous worker populations where workers have different levels
    of proficiency.
    In this section we study a quantitative setting in which workers can give a real number as an answer to a task.
    We will show that hierarchical schemes provide incentives to be truthful and that the strength of the incentive does not deteriorate with the depth of the hierarchy.

    Let us consider a worker $u$ who assigns value $x$ to a task with answer $t\in \mathbb{R}$.
    Evaluation $x$ is a random variable with variance $\sigma^2=\EE(x-\EE[x])^2$ and bias $b=\EE[x] - t$.
    The expected error $v$ of the evaluation is $\EE(x-t)^2$.
    We show that the expected error $v$ can be represented as a sum of the variance $\sigma^2$ and squared bias
    $b^2$.
    Indeed, by adding and subtracting $\EE[x]$ within $\EE[(x-t)^2]$, expanding the squared sum and using the fact that $\EE[x - \EE[x]]= 0$, we obtain
    \begin{align*}
        \EE[(x - t)^2] &= \EE[(x - \EE[x] + \EE[x] - t)^2] = \EE[(x - \EE[x])^2] + (\EE[x] - t)^2  = \sigma^2 + b^2 \eqpun .
    \end{align*}

    The following proposition specifies the expected penalty of a worker $u$ with variance $\sigma^2_u $ and bias $b_u$ when evaluated by a superior $w$ with variance $\sigma^2_w$ and bias $b_w$.
    \begin{proposition}
        \label{prop-quant-penalty}
        Let workers $u, w\in U$ have variances $\sigma^2_u, \sigma^2_w$ and biases $b_u, b_w$ respectively.
        If worker $w$ supervises worker $u$ by assigning a penalty according to function (\ref{loss-quant}) with penalty constant $c>0$,
        then the expected penalty $l(\sigma_u, b_u, \sigma_w, b_w)$ of worker $u$ is 
        \begin{align}
            \label{expect-superv-penalty}
            l(\sigma_u, b_u, \sigma_w, b_w) = c(\sigma^2_u + b_u^2 - 2b_ub_w + \sigma^2_w + b_w^2) \eqpun .
        \end{align}
    \end{proposition}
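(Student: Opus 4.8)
The plan is to expand the quadratic loss and reduce everything to the bias--variance identity established immediately before the statement. Write $x$ for worker $u$'s estimate of the common task and $y$ for worker $w$'s estimate, and let $t\in\reals$ be the task's true answer. The expected penalty is $\EE[\ell(x,y)] = c\,\EE[(x-y)^2]$, and the key step is to insert $t$ and split the error: since $x-y = (x-t)-(y-t)$,
\begin{align*}
  \EE[(x-y)^2] = \EE[(x-t)^2] - 2\,\EE[(x-t)(y-t)] + \EE[(y-t)^2] \eqpun .
\end{align*}

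First I would dispatch the two square terms using the decomposition derived above: $\EE[(x-t)^2] = \sigma_u^2 + b_u^2$ and $\EE[(y-t)^2] = \sigma_w^2 + b_w^2$. Next I would treat the cross term. Here I invoke the modeling assumption that the two workers' evaluations of the shared task are independent random variables --- the natural assumption in the scheme, since $u$ and $w$ perform the task separately --- so that the expectation of the product factors: $\EE[(x-t)(y-t)] = \EE[x-t]\cdot\EE[y-t]$. By the definition of bias, $\EE[x-t] = \EE[x]-t = b_u$ and likewise $\EE[y-t] = b_w$, so the cross term equals $b_u b_w$. Substituting the three pieces back gives $\EE[(x-y)^2] = \sigma_u^2 + b_u^2 - 2 b_u b_w + \sigma_w^2 + b_w^2$, and multiplying by the penalty constant $c$ yields exactly~(\ref{expect-superv-penalty}).

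The only point that goes beyond routine algebra is the independence of the two workers' estimates, which is what lets the expectation of the product factor into the product of expectations; this is the step I would flag as the substantive assumption rather than a difficulty. Everything else is the expansion of a square together with the bias--variance split already proven, so the write-up should be brief.
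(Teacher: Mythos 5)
Your proof is correct and follows essentially the same route as the paper's: insert the true value $t$, expand $(x-y)^2=(x-t)^2-2(x-t)(y-t)+(y-t)^2$, apply the bias--variance identity to the square terms, and use independence of the two evaluations to factor the cross term into $b_u b_w$. The paper's own proof does exactly this (also invoking independence explicitly), so there is nothing to add.
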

    \begin{proof}
        We use $\EE_{u}$, $\EE_{w}$ to denote the expectations over the evaluation of workers $u$ and $w$ respectively.
        Let $x$ and $y$ be evaluations to a task by workers $u$ and $w$ respectively.
        We simplify the expected loss $\EE_u\EE_w[c(x-y)^2]$ by replacing expression $(x-y)^2$ with the equivalent expression
        $(x-t)^2 -2(x-t)(y-t) +(y-t)^2$, and using the independence of evaluations $x$ and $y$
        \begin{align*}
            l(\sigma_u, b_u, \sigma_w, b_w) &= c\left(\EE_u[(x - t)^2] - 2(\EE_u[x] - t)(\EE_w[y] - t) + \EE_w[(y - t)^2]\right) \\
            &= c(\sigma^2_u + b_u^2 - 2b_ub_w + \sigma^2_w + b_w^2) \eqpun .
        \end{align*}
    \end{proof}

    A worker $u$ has control over the expected estimation error $v$ that is the sum of $\sigma^{2}$ and $b^{2}$.
    To achieve the expected error $v$, worker $u$ makes effort $f_u(v)$, where $f_u$ is a strictly convex and decreasing function defined on $\mathbb{R}^+$.

    Let workers be organized into a supervision tree with branching factor $k$, and let
    worker $w$ be the parent of worker $u$.
    The expected cost of worker $u$ is the sum of two components: the cost of
    performing $k$ tasks, and the penalty due to the supervision by worker $w$.
    It directly follows from Proposition~\ref{prop-quant-penalty} that if superior $w$ is
    unbiased, i.e. $b_w = 0$, then the best response $v^*_u$ of worker $u$ is 
    \begin{align}
        \label{eq-bestresp-quant}
        v^*_u = \underset{v}{\arg\min}\, (kf_u(v) + cv) \eqpun .
    \end{align}
    Surprisingly, the best response of a worker to an unbiased superior does not depend on the precision of the worker's superior. This fact allows us to reason about the best response of a worker to an unbiased supervision without
    specifying the particular superior worker.

    We adopt the following natural assumption on the population of workers.
    We assume that the average bias of the best response to an unbiased supervision is $0$.
    This assumption does not restrict individual workers to be unbiased.

    Effort functions determine workers proficiency, as supervision penalty (\ref{loss-quant}) can provide
    incentives to be truthful only if the cost of performing a task does not outweigh the penalty.

    \begin{definition}
        Given $\varepsilon >0$ and penalty constant $c>0$ of loss (\ref{loss-quant}),
        a worker $u$ with effort function $f_u$ is {\em proficient} if the best response $v^*_u$
        %(\ref{eq-optresp-quant})
        to an unbiased supervisor is less than $\varepsilon$. 
    \end{definition}

    Because workers do not know each other's effort functions, strategic interaction of workers in a supervision
    tree can be formulated as a game with incomplete information.

    The following theorem shows that workers in a supervision tree have incentives to be truthful.
    
    \begin{theorem}
        \label{th-quant-hetero}
        Let rational workers be organized into a supervision tree with branching factor k
        and loss function (\ref{loss-quant}). 
        %Let the population of workers have the property that the average bias of workers under unbiased supervision is 0, then
        %And let the population of workers have the property that, when evaluated without a bias, the average best response worker bias is 0, then
        And let the population of workers have the property that the average bias of the best response to an unbiased supervision is 0, then
        all {interim} Bayes-Nash equilibria
        have the property that every proficient worker plays truthfully.
    \end{theorem}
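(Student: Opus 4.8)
The plan is to follow the inductive skeleton of Theorem~\ref{th-binarverif-hetero}, but with the variance--bias accounting of Proposition~\ref{prop-quant-penalty} in place of the error-probability accounting, and leaning on the observation recorded right after~(\ref{eq-bestresp-quant}): against an \emph{unbiased} superior a worker's best response is the minimizer $v^*_u = \arg\min_v\bigl(k f_u(v) + c v\bigr)$, which does not depend on the superior's precision at all. Accordingly, the induction hypothesis I would carry along the depths $l = 1, \ldots, L-1$ of the supervision tree is slightly stronger than the theorem's conclusion: in every interim Bayes-Nash equilibrium, each worker $u$ at level $l$ plays the strategy $v^*_u = \arg\min_v\bigl(k f_u(v) + c v\bigr)$, i.e., exactly its best response to an unbiased supervisor. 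Since a worker's payoff depends only on the parent with which it shares a task, this is the natural object to track; and once it is established, the definition of a proficient worker immediately gives $v^*_u < \varepsilon$ for every proficient $u$, which is truthful play.

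For the base case $l=1$, worker $u$ is supervised by the root, which is unbiased, so $b_w = 0$ in~(\ref{expect-superv-penalty}) and the expected penalty of $u$ is $c\bigl(\sigma^2_u + b_u^2\bigr) + c\,\sigma^2_w = c\,v_u + \mathrm{const}$, where $v_u = \sigma^2_u + b_u^2$ is the only quantity $u$ controls (by the model a worker commits to a single $v_u$ for all of its tasks, equivalently it does not know which of its $k$ tasks is shared). Hence $u$'s expected cost is $k f_u(v_u) + c\,v_u + \mathrm{const}$, whose minimizer is precisely~(\ref{eq-bestresp-quant}); this establishes the hypothesis at level $1$.

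For the inductive step, take $u$ at level $l+1$ with parent $w$ at level $l$. By the induction hypothesis $w$ plays, in equilibrium, its best response to an unbiased supervisor, so its equilibrium bias $b^*_w$ is a deterministic function of $f_w$. Worker $u$ does not know $f_w$, so it minimizes the expectation of~(\ref{expect-superv-penalty}) over $f_w \sim \mathcal{F}$, which by linearity equals
\begin{align*}
  c\Bigl(\sigma^2_u + b_u^2 - 2 b_u\,\EE_{f_w\sim\mathcal{F}}\bigl[b^*_w\bigr] + \EE_{f_w\sim\mathcal{F}}\bigl[\sigma^{*2}_w + b^{*2}_w\bigr]\Bigr) .
\end{align*}
The population assumption---the average bias of the best response to an unbiased supervision is $0$---gives $\EE_{f_w\sim\mathcal{F}}\bigl[b^*_w\bigr] = 0$, so the cross term vanishes and the remaining expectation is a constant $u$ cannot influence. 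As in the base case, $u$'s expected cost collapses to $k f_u(v_u) + c\,v_u + \mathrm{const}$, so its equilibrium strategy is again $\arg\min_v\bigl(k f_u(v) + c v\bigr)$, which completes the induction; every proficient worker then plays with $v^*_u < \varepsilon$, i.e.\ truthfully.

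The main obstacle is the same as in the binary case: the epistemics, i.e., justifying that from $u$'s viewpoint the parent behaves as an unbiased supervisor \emph{in expectation}. This is not because any individual parent is unbiased---generically it is not---but because the induction pins every parent's equilibrium strategy to $\arg\min_v\bigl(k f_w(v) + c v\bigr)$, and the induced biases average to $0$ over $\mathcal{F}$ by hypothesis. The feature that makes the quantitative case cleaner than the binary one---and that removes any need for a ``proficient on average'' assumption---is precisely the precision-independence noted after~(\ref{eq-bestresp-quant}): the term $\EE_{f_w\sim\mathcal{F}}\bigl[\sigma^{*2}_w + b^{*2}_w\bigr]$ never enters the optimization, so neither the supervisor's precision nor the accumulated imprecision of intermediate levels can weaken the incentive, and the argument in fact pins down the equilibrium strategy of every worker uniquely.
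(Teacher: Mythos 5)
Your proposal is correct and follows essentially the same route as the paper's own proof: induction on tree depth with the hypothesis that each worker plays $v^*_u = \arg\min_v\bigl(k f_u(v) + c v\bigr)$, using Proposition~\ref{prop-quant-penalty} and the zero-average-bias assumption to kill the cross term so the expected cost collapses to $k f_u(v_u) + c\,v_u$ plus a constant. Your added remarks on why no ``proficient on average'' condition is needed here are accurate but do not change the argument.
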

    \begin{proof}
        We prove by induction of depth $l = 1, \dots, L-1$  of the tree that the best response of a worker $u$ is to
        play with $v^*_u$ that is the solution of optimization problem (\ref{eq-bestresp-quant}).
        Thus, by the definition of a proficient worker, it will follow that the best response of a proficient worker $u$
        is to be truthful.
        At level 1, a worker $u$ is evaluated by the supervisor who has bias 0, therefore
        worker $u$ plays with the expected error $v^*_u$.
        
        To prove the inductive step,
        let us consider a worker $u$ at level $l+1$ and their superior
        $w$ at level $l$.
        According to the  inductive assumption, worker $w$ plays with $v^*_w$ that is the best response to an unbiased supervision.
        Let $b^*_w$ be a bias of worker $w$.
        According to Proposition~\ref{expect-superv-penalty}, superior $w$ induces penalty $c(v_u - 2 b_u b^*_w + v^*_w)$ to worker $u$.
        Loss $L$ of worker $u$ under superior $w$ is 
        \begin{align*}
            L = kf_u(v_u) + cv_u - 2cb_u b^*w + v^*_w \eqpun .
        \end{align*}
        The expected loss $L$ across different types of superiors is
        \begin{align*}
            \EE[L] = kf_u(v_u) + cv_u - 2cb_u \EE[b^*_w] + \EE[v^*_w] \eqpun .
        \end{align*}
        Due to the assumption on the population of workers, we have $\EE[b^*_w] = 0$.
        Therefore, the best response of worker $u$ is to minimize $(kf_u(v_u) + cv_u )$, i.e. to play with
        the error probability $v^*_u$ (Equation \ref{eq-bestresp-quant}).
        This finishes the inductive step and the proof of the theorem.
    \end{proof}

\section{Incentives Schemes with Multiple Reviews per Item}
\label{sec-redundancy}

In the incentive schemes proposed in the previous sections,
many tasks will have only one worker assigned to it.
In this section, we consider the case of crowdsourcing with redundancy, i.e., when each tasks has multiple workers assigned to it. 
This can be useful when it is possible to aggregate the answers produced by the workers into a single, higher accuracy answer. 

\subsection{One Level Supervised Schemes}

When a task is performed by multiple workers, verifying a single task $i$ can be used to verify all the workers in $\partial i$. 
The supervisor can leverage this in order to try to minimize the number of tasks to be verified, while guaranteeing a worker verification probability $p$ that satisfies Theorem~\ref{th-lower-bound-p}.
We will show that when a graph $G$ of tasks and workers is given, i.e. when we do not have control over task allocation, then constructing the smallest subset $S$ is $\mathcal{NP}$-hard, and it can be proved by reduction from vertex-cover.
However, if we can control tasks allocation, then we can easily construct graphs on which the set of tasks that need verification is as small as possible.

%\subsubsection{The assignment graph is given.}
\smallskip
\noindent{\bf The assignment graph is given.}
We first study a scenario in which the worker-task assignment is fixed, and we must choose the subset $S\subseteq I$ of
tasks verified by the supervisor.
When the supervisor examines a task $i\in I$, she evaluates all the
workers $\partial i$ who were assigned to the task $i$.
Figure~\ref{fig-grading-all-workers} illustrates a case when examining 3 tasks is enough to evaluate all the workers.

\iffalse
    \begin{figure}[!ht]
    \centering
    \includegraphics[width=0.37\linewidth]{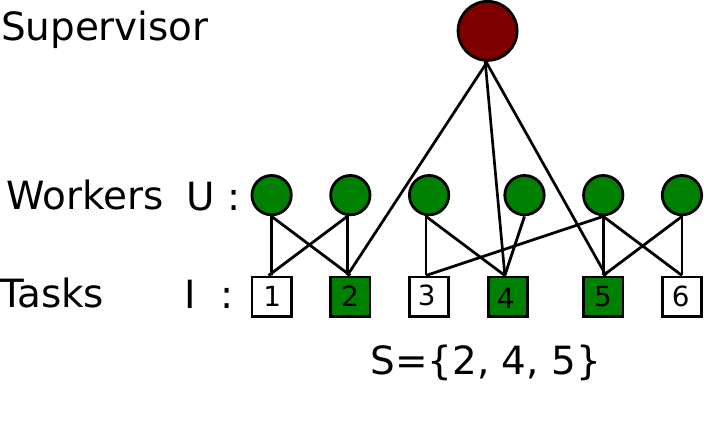}
    \caption{An example of a graph with all 6 workers being evaluated based on a set of 3 tasks.
    The supervisor inspects tasks 2, 4 and 5 that connected to all workers.
    The tasks and workers the supervisor reaches out are colored.}
    \label{fig-grading-all-workers}
    \end{figure}
\fi

The supervisor wants to spend the least amount of effort to evaluate at least $m$ workers.
For the case $p=1$, or $m = |U|$, the supervisor needs to find the smallest subset of tasks such that every worker is assigned one task from the subset.
We name the problem of finding such a set the {\it Superior Assignment} problem (abbreviated to {\it SA}).
The following theorem shows that the {\it Supervisor Assignment} Problem is $\mathcal{NP}$-hard.

\begin{theorem}
\label{th-vertex-cover}
Supervisor Assignment problem is $\mathcal{NP}$-hard.
\end{theorem}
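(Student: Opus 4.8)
The plan is to reduce from the decision version of \textsc{Vertex Cover}, a classical NP-complete problem: given an undirected graph $H=(V,F)$ and an integer $t$, decide whether $H$ has a vertex cover of size at most $t$. Given such an instance, I would build an instance of the (decision version of the) \emph{Supervisor Assignment} problem as follows: introduce one task $i_v$ for every vertex $v\in V$, and one worker $u_e$ for every edge $e\in F$; for an edge $e=\{a,b\}\in F$, let worker $u_e$ be assigned exactly the two tasks $i_a$ and $i_b$, i.e.\ $\partial u_e=\{i_a,i_b\}$. This yields a bipartite worker--task graph $G=(U\union I,E)$ computable in time linear in the size of $H$, in which every worker is assigned exactly $k=2$ tasks, so it is a legitimate instance of the model of Section~3; the target cardinality for the sought set $S\subseteq I$ is the same integer $t$.

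The core step is the correspondence between solutions. For a set $S\subseteq I$ put $W_S=\{v\in V : i_v\in S\}$, so that $|W_S|=|S|$. I claim that every worker is assigned some task in $S$ (that is, $\partial u_e\cap S\neq\emptyset$ for all $e\in F$, which is exactly the requirement of the Supervisor Assignment problem with $p=1$) if and only if $W_S$ is a vertex cover of $H$. Indeed, for an edge $e=\{a,b\}$ we have $\partial u_e\cap S\neq\emptyset$ iff $i_a\in S$ or $i_b\in S$, iff $a\in W_S$ or $b\in W_S$, iff $e$ is covered by $W_S$. Hence $H$ admits a vertex cover of size at most $t$ iff $G$ admits a valid supervisor-assignment set of size at most $t$, and the minimum supervisor-assignment set of $G$ has exactly the size of a minimum vertex cover of $H$. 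Since the reduction is polynomial-time and answer-preserving, the Supervisor Assignment problem is NP-hard; membership in NP (for the decision version) is immediate by guessing $S$ and checking coverage of all workers in polynomial time.

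I do not expect a genuine obstacle here — the construction is essentially the textbook equivalence between vertex cover and hitting the incidence structure of a graph (equivalently, a special case of \textsc{Set Cover} where the universe is $U$ and the sets are $\{\partial i : i\in I\}$). The only points worth a sentence of care are bookkeeping: isolated vertices of $H$ produce tasks incident to no worker and can be discarded (or harmlessly retained, since an optimal $S$ never includes them), and degenerate cases ($F=\emptyset$, or $t\ge|V|$) are handled directly. If one prefers a reduction that does not assume $H$ simple or connected, the same map works verbatim on multigraphs, since parallel edges just create distinct workers with identical task sets.
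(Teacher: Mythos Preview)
Your proposal is correct and follows essentially the same approach as the paper: both reduce \textsc{Vertex Cover} to Supervisor Assignment by mapping vertices to tasks and edges to workers, with each edge-worker assigned the two tasks corresponding to its endpoints, so that vertex covers of the original graph correspond exactly to supervisor-assignment sets. Your write-up is slightly more careful (explicitly phrasing the decision version, noting NP membership, and handling degenerate cases), but the reduction itself is the same incidence-graph construction the paper uses.
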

\begin{proof}
We will show that finding the smallest vertex cover for any graph is an instance of the Supervisor Assignment problem.
Thus, solving the {\it Supervisor Assignment} problem is at least as hard as solving
Vertex Cover.

Let $G= (V, E)$ be an arbitrary graph with vertexes $V$ and edges $E$.
We construct a bipartite revision graph $G'$ for a set of workers $U$ and set of items $I$ by taking $U=E$ and $I=V$: that is, we use workers in our  
bipartite graph to represent the edges of the original graph.
Each worker $u \in U$ is assigned to review items $v_1, v_2$, where
in $G$ the edge $u$ connects $v_1$ and $v_2$.
The graph $G'$ is called the incidence graph \cite{IncidenceGraphBook}.
It is immediate to see that a subset of vertices $V' \subseteq V$ is a {\em vertex cover\/} for $G$ if and only if picking all items in $V'$ enables the verification of all workers $E$ of $G'$. 
Thus, Vertex Cover can be reduced to the Supervisor Assignment problem.
\end{proof}

We now show that, if every worker is assigned at most $k$ tasks, there are fast $k$-approximation algorithm for SA.
A $k$-approximation algorithm finds a subset $S'$ of tasks such that $|S'| < k |S|$, where $S$ is
the optimal solution.

We will show that the SA problem on graph $G$ is equivalent to the VC problem on a hypergraph with edge size at most $k$.
A hypergraph $H=(V, F)$ is a set of  vertices $V$ and hyperedges $F$.
A hyperedge $f\in F$ connects a subset of edges from $V$.
Hypergraph $H$ has edge size at most $k$ if every edge $f\in F$ contains at most $k$ nodes.
There are known simple $k$-approximation algorithms for VC on $k$-bounded hypergraphs \cite{halperin2002improved}.

\begin{proposition}
The Supervisor Assignment problem for a bipartite review graph $G = (U \union I, E)$ with degree at most $k$ is
equivalent to Vertex Cover for a hypergraph with edge size at most $k$.
\end{proposition}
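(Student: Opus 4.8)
The plan is to exhibit a value-preserving correspondence between instances of the two problems, in both directions, so that optimal solutions coincide in cardinality. First I would construct, from the bipartite review graph $G = (U \union I, E)$ in which every worker is assigned at most $k$ tasks, a hypergraph $H = (V, F)$ by setting $V = I$ (the tasks become the vertices of $H$) and, for each worker $u \in U$, adding a hyperedge $f_u = \partial u$ consisting of exactly the tasks assigned to $u$. Since $|\partial u| \le k$ for every $u$, each hyperedge has size at most $k$, so $H$ is $k$-bounded.

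Next I would verify that the feasible solutions match. A set $S \subseteq I$ allows the supervisor to verify every worker exactly when $S \cap \partial u \neq \emptyset$ for each $u \in U$; by construction this is precisely the statement that $S$ meets every hyperedge $f_u$, i.e., that $S$ is a vertex cover of $H$. Hence $S$ is feasible for Supervisor Assignment on $G$ iff it is feasible for Vertex Cover on $H$, and in particular a minimum Supervisor Assignment set has the same cardinality as a minimum vertex cover of $H$. This settles the reduction from SA to hypergraph VC.

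For the converse I would run the construction backwards: given any $k$-bounded hypergraph $H = (V, F)$, define a bipartite review graph with $I = V$, $U = F$, and an edge between worker $f \in F$ and task $v \in I$ whenever $v \in f$; this is the incidence graph of $H$, and every worker is assigned at most $k$ tasks. The identical argument on feasible sets then shows the two optimization problems are the same, so the equivalence holds in both directions. Combined with the known $k$-approximation algorithms for Vertex Cover on $k$-bounded hypergraphs cited above, this yields the claimed fast $k$-approximation for SA.

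I do not expect a genuine obstacle here; the only points needing care are (i) the degenerate case of a worker with no assigned tasks, which would produce an empty hyperedge — such a worker cannot be verified at all, so we may safely assume every worker has degree at least one — and (ii) making explicit that the correspondence preserves solution value in both directions, since that is exactly what is needed to transport the approximation guarantee. Both are routine to dispatch once the construction above is in place.
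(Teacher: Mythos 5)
Your proof is correct and takes essentially the same approach as the paper: the proposition is established by the immediate identification of the SA instance with a hypergraph vertex-cover instance built from the incidence structure of $G$. The one substantive difference is orientation: the paper's one-line proof takes $U$ as the vertex set and $I$ as the hyperedge set, under which selecting tasks to cover all workers is literally an edge-cover/set-cover problem and the hyperedge sizes are bounded by task degrees rather than worker degrees; your construction ($V = I$, one hyperedge $\partial u$ of size at most $k$ per worker $u$) is the one that actually realizes SA as vertex cover on a $k$-bounded hypergraph and is presumably what the paper intended. You also spell out the converse reduction (via the incidence graph of an arbitrary $k$-bounded hypergraph), which the paper leaves implicit but which is needed for the stated ``equivalence'' and for transporting the $k$-approximation guarantee.
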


\begin{proof}
The SA problem is immediately equivalent to a VC cover for a hypergraph that has $U$ as vertex set, and has $I$ as edge set, where each edge $i \in I$ connects the vertices that correspond to the workers to which $i$ is assigned.
\end{proof}

A simple $k$-approximation algorithm works as follows.
Let $G=(V\union F, E)$ be a bipartite graph and $S = \emptyset$.
While set $E$ is not empty,
we randomly choose an edge $(w, f) \in E$, add node $w$ to set $S$, and
delete all edges incident to $w$ or $f$.
When $E$ is empty, set $S$ is a $k$-approximation to the SA problem on graph $G$.

%\subsubsection{The assignment graph can be constructed.}
\smallskip
\noindent{\bf The assignment graph can be constructed.}
\iffalse
    \begin{figure}[!ht]
    \centering
    \includegraphics[width=0.5\linewidth]{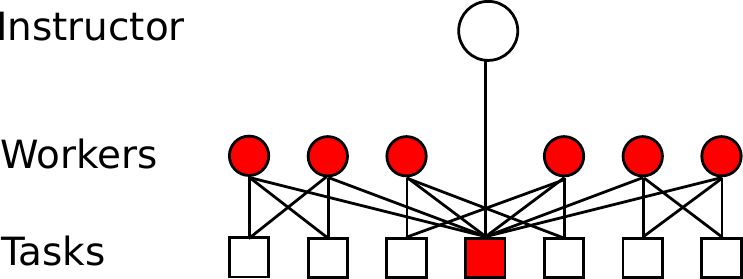}
    \caption{An example of review graph where the supervisor evaluates all workers by reviewing only one task.}
    \label{fig-pathological-graph}
    \end{figure}
\fi
If we can construct the assignment graph, then it is easy to ensure optimality.
%One extreme example is depicted on figure~\ref{fig-pathological-graph}.
When all workers have only one task in common,
the supervisor can evaluate all the workers by verifying only one task.
From a crowdsourcing perspective, however, concentrating effort of all workers on
one assignment has the unwelcome effect that all other tasks receive fewer
workers.
If we use worker multiplicity for a task in order to achieve higher reliability in the solution of a task, this is undesirable.
A natural assumption is to require the review graph $G$ to be $k$-regular.

To construct a $k$-regular review graph, we proceed as follows.
We select $n=\lceil \frac{|U|}{k} \rceil$ ``peg''  tasks first.
Each of these peg tasks will be done by a set of $k$ non-overlapping workers, so by verifying the $n$ peg tasks, the supervisor is able to verify all workers ($p=1$). 
For smaller values of the verification probability $p$, the supervisor can simply choose to verify a randomly chosen subset of the peg tasks. 
We assume, of course, that the workers cannot compare their work with each other, so that they cannot infer which tasks are the peg tasks among those they are assigned.
Once the peg tasks and their reviewers are chosen, we assign the other tasks to workers in any way that leads to $k$-regularity.
It is easy to see that this construction is optimal, for $|U|$ workers doing $k$ tasks each cannot be verified by picking fewer than $n$ items. 
\iffalse
    Figure~\ref{fig-pegs} illustrates the construction for $k=3$ and $|U|=6$.

    \begin{figure}[!ht]
    \centering
    \includegraphics[width=0.5\linewidth]{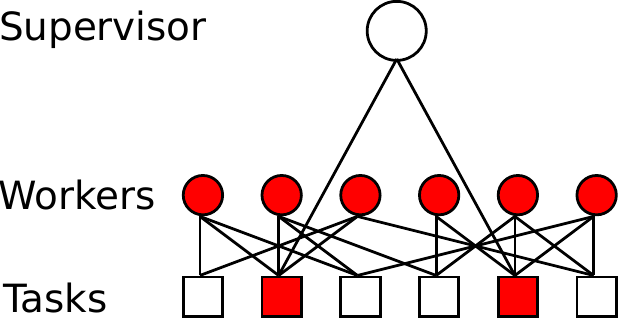}
    \caption{An example of a 3-regular graph where 2 peg tasks (in red) connect all workers.}
    \label{fig-pegs}
    \end{figure}
\fi

\subsection{Hierarchical Supervised Schemes}
\label{subsec-general-hierarch}

A {\it supervision hierarchy} combines a bipartite graph of workers and tasks
and a supervision tree.
The supervision tree provides an incentive while the bipartite graph ensures that every task
is assigned to several workers.

\begin{definition}
\label{def-sup-heirarch}
A supervision hierarchy is a connected graph that consists of two subgraphs: a bipartite graph
$G=(U\union I, E)$ and a supervision tree $T$ with workers $U_T$ and tasks $I_T$.
The set of tasks $I_T$ is a subset of tasks $I$ and for every worker $u\in U$
there is a task $i\in I_T$ such that the edge $(u, i)$ belongs to $E$.
\end{definition}

Figure~\ref{fig-general-hierarchy} illustrates such a supervision hierarchy.
The supervisor provides an incentive for the two immediate subordinate workers while 
these workers provide the incentive to the rest of workers by performing a total of 4 tasks.

\begin{figure}
  \centering
  \begin{subfigure}[b]{0.4\textwidth}
    \centering
    \includegraphics[width=0.87\textwidth]{supervisor_assig.pdf}
    \caption{An example of a graph with all 6 workers being evaluated based on a set of 3 tasks.
The supervisor inspects tasks 2, 4 and 5 that connected to all workers.
The tasks and workers the supervisor reaches out are colored.}
    \label{fig-grading-all-workers}
  \end{subfigure}\hspace{.03\textwidth}%\hfill
  \begin{subfigure}[b]{0.50\textwidth}
    \centering
    \includegraphics[width=0.8\textwidth]{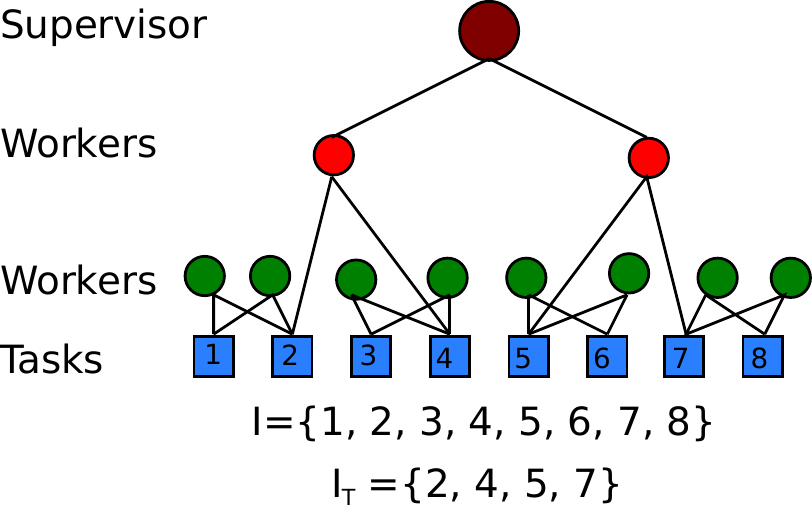}
    \caption{A supervision hierarchy that is a union of a supervision tree and
a bipartite graph of workers and tasks.
Every task is assigned to at least 2 workers.
The set of tasks $I_T$ in the tree is a subset of tasks $I$ in the bipartite graph.
Every worker is assigned at least one tasks from the set $I_T$.}
    \label{fig-general-hierarchy}
  \end{subfigure}
  \caption{A bipartite graph evaluated by the supervisor, and a supervision hierarchy.}\label{fig:animals}
\end{figure}

\iffalse
    \begin{figure}[!ht]
    \centering
    \includegraphics[width=0.4\linewidth]{general_hierarchy.pdf}
    \caption{A supervision hierarchy that is a union of a supervision tree and
    a bipartite graph of workers and tasks.
    Every task is assigned to at least 2 workers.
    The set of tasks $I_T$ in the tree is a subset of tasks $I$ in the bipartite graph.
    Every worker is assigned at least one tasks from the set $I_T$.
    }
    \label{fig-general-hierarchy}
    \end{figure}
\fi

For a given bipartite graph $G$ the task of constructing the smallest supervision
hierarchy is $\mathcal{NP}$-hard.
Indeed, the subset $I_T$ of $I$ has the property that every worker $u\in U$ has at least
one task from $I_T$.
Thus finding the smallest set $I_T$ is an instance of the {\it Supervision Assignment} problem we discussed in the previous section;
and showed that it is an $\mathcal{NP}$-hard problem.

Theorems~\ref{th-binarverif-hetero}, \ref{th-quant-hetero} can be extended to show that
supervision hierarchies provide incentives to be truthful for the binary-verifiable and the quantitative settings respectively.
Indeed, the theorems hold directly for workers that belong to the tree $T$ of the supervision hierarchy.
On the other hand, every worker in graph $G$ of the supervision hierarchy has a superior worker from the bottom level of tree $T$.
Thus, workers in $G$ can be considered as one extra layer of workers within tree $T$.

\section{Conclusions}
We proposed and analyzed supervision incentive schemes that ensure that the optimal strategy for workers is to be truthful. 
The schemes rely on hierarchies in order to scale to arbitrarily large sets of items and workers, while requiring only a constant amount of work on the part of the supervisor. 
In the hierarchy, workers are organized in layers, and every layer exerts an incentive over the layer below, ensuring that the optimal behavior of the workers is sufficiently precise. 
We show that the truthful incentive holds even in populations of workers with diverse proficiency, where workers can have limited proficiency, provided that there are enough proficient workers in the crowd.
Interestingly, the only information the workers need to know about the hierarchy is their level in it: they do not need to know the identities of their supervisors or subordinates, nor which tasks they share, and all workers perform exactly the same work. 
In particular, there are not two flavors of ``normal'' and ``metareview'' tasks. 
Our schemes graciously extend from binary verifiable tasks to quantitative
tasks making them relevant to a wide range of crowdsourcing applications.

%\bibliographystyle{abbrv}

%\bibliographystyle{ACM-Reference-Format}
%\bibliography{references}

\appendix
\section[Proof of Lemma]{Proof of Lemma~\ref{lemma-binar}}
\label{sec-lemma-proof}
\iffalse
    \begin{lemma}
        \label{lemma-binar}
        Let workers $u, w \in U$ have error probabilities $e_u, e_w$ respectively,
        and let worker $w$ be the parent of worker $u$ in a supervision tree
        with branching factor $k$ and penalty cost $C>0$.
        If worker $u$ has effort function $f$ then the expected loss $L(e_u, e_w)$
        of worker $u$ under the supervision of worker $w$ is 
        \begin{align}
            \label{eq-expected-bin-loss}
            L(e_u, e_w) = kf(e_u) + e_u(1-e_w)C + (1-e_u)e_wC + e_ue_wD \eqpun ,
        \end{align}
        where $D$ is a constant from the $[0, C]$ interval.
        Moreover, if there is $\sigma, \varepsilon \in (0, 1/2)$ such that
        $e_w < \varepsilon$ and 
        \begin{align}
            \label{lemma-ineq-C}
            C \ge \frac{f'(\sigma)k}{2\varepsilon - 1} \eqpun ,
        \end{align}
        then every $e^*_u\in \underset{e_u}{\arg\min} L(e_u, e_w)$ satisfies inequality
        $e^*_u < \sigma$ .
      %  \begin{align*}
      %      e^*_u < \sigma \eqpun .
      %  \end{align*}
    \end{lemma}
\fi
\begin{proof}
    The expected loss $L(e_u, e_w)$ of worker $u$ consists of 4 components:
    \begin{align*}
        L(e_u, e_w) = k f(e_u) + (1 - e_u)e_wC + e_u (1- e_w) C + e_u e_w D \eqpun .
    \end{align*}
    The first component $kf(e_u)$ is due to the effort of performing $k$ tasks with error probability $e_u$.
    The other components account for the penalty that the superior $w$ imposes in three mutually exclusive events.
    In particular, the second component $(1 - e_u)e_w C$ accounts for the event where
    worker $u$ provides the correct solution to the common task but 
    the superior makes a mistake.
    The third component $e_u (1 - e_w)C$ accounts for the case where
    the worker makes a mistake and the superior is correct.
    The fourth component $e_u e_w D$ accounts for the event where both worker and superior are incorrect; the penalty $D$ belongs to the interval $[0, C]$, depending  on the probability of the event when the worker and the superior have different answers and both of them are incorrect.

    The loss $L(e_u, e_w)$ is a convex function of $e_u$ as a combination of convex and linear functions.
    Therefore, the set $\arg\min L(e_u, e_w)$ is not empty.
    Let $e^*_u\in \arg\min L(e_u, e_w)$.
    Error probability $e^*_u$ satisfies the following inequality
    \begin{align*}
        k f'(e^*_u) - e_wC + (1 - e_w)C + e_wD = 0 \Rightarrow
        f'(e^*_u) = \frac{(2e_w - 1)C - e_wD}{k} \eqpun .
    \end{align*}
    Combining it  with the fact that $e_w, D \ge 0$, 
    and with the assumption that $e_w < \varepsilon$, we obtain
    \begin{align}
        \label{lemma-ineq-hierarch-proof-1}
        f'(e^*_u) &\le \frac{(2e_w - 1)C}{k} < \frac{(2\varepsilon - 1)C}{k}  \eqpun .
    \end{align}
    Note that the $(2\varepsilon - 1)$ multiplier is negative as $\varepsilon < 1/2$.
    Because $(2\varepsilon - 1) < 0$, the right hand side of inequality (\ref{lemma-ineq-hierarch-proof-1}) can be bounded using inequality (\ref{lemma-ineq-C})
    \begin{align}
        \label{lemma-ineq-hierarch-proof-2}
        \frac{(2\varepsilon - 1)C}{k} \le \frac{(2\varepsilon - 1)}{k}\frac{f'(\sigma) k}{2\varepsilon - 1} = f'(\sigma) \eqpun .
    \end{align}
    From inequalities (\ref{lemma-ineq-hierarch-proof-1}, \ref{lemma-ineq-hierarch-proof-2}), it follows that $f'(e^*_u) <  f'(\sigma)$.
    Because function $f'$ is increasing as the derivative of a strictly convex
    function \cite{rockafellar},
    we conclude that $e^*_u < \sigma$.
\end{proof}

\section[Proof of Theorem]{Proof of Theorem~\ref{th-counter-example}}%\refcounterexample}
\label{th-counter-example-proof}

To prove the theorem, we will show that when 
 the cost $C$ satisfies inequality
\begin{align}
    \label{ineq-th-counter-example}
    C < \frac{f'(\varepsilon)k}{2\varepsilon - 1} \eqpun ,
\end{align}
then there exists an effort function, a supervision tree, and a level $l$ such that a rational worker chooses a strategy with error probability greater than $\varepsilon$.
In particular, we choose effort function to be $f(x) = -\ln(x)$, and we
assume that (\ref{ineq-th-counter-example}) holds for $\varepsilon \in (0, 1/4)$.
To construct a tree we assume that the solution set $A$ consists of 2 elements only.
Let $e_t$ be error probability of a worker on depth $t$ of the supervision tree.
We need to show that for large enough $l$ a rational worker chooses $e_l>\varepsilon$.
The superior on level 0 provides correct solutions, thus $e_t = 0$.
The expected loss $L_t(e_t)$ of a worker $u$ on level $t$ is
\begin{align}
    \label{eq-th-counter-example-1}
    L_t(e_t) = kf(e_t) + e_t (1-e_{t-1})C + (1 - e_t) e_{t-1}C \eqpun .
\end{align}
Worker $u$ minimizes their loss by choosing $e_t$ that sets the derivative of the loss to 0
\begin{align*}
    kf'(e_t) + (1-2e_{t-1})C = 0 \Rightarrow f'(e_t) = \frac{(2e_{t-1} - 1)C}{k} \eqpun .
\end{align*}
Given that $f'(x) = -1/x$, the optimal value $e_t$ is $\frac{k}{(1 - 2e_{t-1})C}$.
%\begin{align*}
%    e_t = \frac{k}{(1 - 2e_{t-1})C}
%\end{align*}
The difference between $e_t$ and $e_{t-1}$ is
\begin{align}
    \label{eq-th-counter-example-2}
    e_t - e_{t-1} = \frac{k/C}{1-2e_{t-1}} - e_{t-1} = \frac{k/C + 2e_{t-1}^2 - e_{t-1}}{1-e_{t-1}} \eqpun .
\end{align}

We are going to find a constant $\Delta > 0$ such that $e_t - e_{t-1} >  \Delta$
for any $t\ge0$.
This would mean that as the tree depth increases, the probability of errors by workers would steadily increase, eventually surpassing the truthfulness threshold $\varepsilon$.
To bound the right hand side of (\ref{eq-th-counter-example-2}), we note that $1/(1-2e_{t-1}) \le 1$ for $e_{t-1} \ge 0$.
Therefore
\begin{align}
    \label{ineq-the-counter-example-4}
    e_t - e_{t-1} \ge k/C + 2e_{t-1}^2 - e_{t-1} \eqpun .
\end{align}
Note that function $g(e_{t-1}) = 2e_{t-1}^2 - e_{t-1}$ is monotonically decreasing on the interval $[0, \varepsilon]$ as $\varepsilon < 1/4$.
Assume that that all levels workers play with a truthful strategy, that is,  $e_t < \varepsilon$ for any $t\ge 0$.
We can further bound $e_t - e_{t-1}$ by using inequality $2e_{t-1}^2 - e_{t-1} > 2\varepsilon^2 - \varepsilon$ and inequality (\ref{ineq-the-counter-example-4})
\begin{align}
    \label{ineq-th-counter-example-2}
    e_t - e_{t-1} > k/C + 2\varepsilon^2 - \varepsilon \eqpun .
\end{align}

Inequality (\ref{ineq-th-counter-example}) implies that for some $\delta >0$,
$C=\frac{kf'(\varepsilon)}{2\varepsilon - 1} - \delta$.
We use it to simplify the right hand side of (\ref{ineq-th-counter-example-2})
\begin{align*}
    e_t - e_{t-1} > k/C + 2\varepsilon^2 - \varepsilon = \frac{k}{\frac{k}{\varepsilon(1-2\varepsilon)} - \delta} -\varepsilon(1 - 2\varepsilon) \eqpun .
\end{align*}
For brevity, we denote $\varepsilon(1-2\varepsilon)$ as $a$.
\begin{align*}
    e_t - e_{t-1} > \frac{k}{\frac{k}{a} - \delta} - a = \frac{ak}{k-a\delta} - a =  \frac{a^2\delta}{k - a\delta} \eqpun .
\end{align*}
Expression $\frac{a^2\delta}{k - a\delta}$ is greater than 0 and does not depend on the level $t$ of the hierarchy; we denote it as $\Delta$.
\begin{align}
    \label{ineq-th-counter-example-3}
    e_t - e_{t-1} > \Delta \eqpun.
\end{align}
The derivation of (\ref{ineq-th-counter-example-3}) is based on the assumption that $e_t < \varepsilon$.
If we choose a hierarchy level $l$ such that
$l > \frac{\varepsilon}{\Delta}$, it follows from inequality (\ref{ineq-th-counter-example-3}) that $e_l - e_0 > \Delta * l > \varepsilon$.
Because $e_0 = 0$, we conclude that $e_l > \varepsilon$ which contradicts to our
assumption that $e_t <\varepsilon$ for $t \ge 0$.
We have shown there exists a hierarchy level $l$ such that $e_l > \varepsilon$ and the
supervision tree does not provide incentive past depth $l$.

\end{document}